\newcommand{\beq}{\begin{eqnarray}}
\newcommand{\eeq}{\end{eqnarray}}
\newcommand{\la}{\langle}
\newcommand{\ra}{\rangle}
\newcommand{\gamm}{\textrm{\raisebox{.4ex}{$\gamma$}}}
\newtheorem{thm}{Theorem}[section]
\newtheorem{lem}{Lemma}[section]
\newtheorem{prop}{Proposition}[section]
\newtheorem{cor}{Corollary}[section]
\newtheorem{defn}{Definition}[section]
\newtheorem*{rem}{Remark}
\begin{document}
\author{K.J. Sharkey, I.Z. Kiss, R.R. Wilkinson, P.L. Simon} 
\title{Exact equations for SIR epidemics on tree graphs} 
\date{}
\maketitle

\section*{Abstract}
We consider Markovian susceptible-infectious-removed (SIR) dynamics on time-invariant weighted contact networks where the infection and removal processes are Poisson and where network links may be directed or undirected. We prove that a particular pair-based moment closure representation  generates the expected infectious time series for networks with no cycles in the underlying graph. Moreover, this ``deterministic'' representation of the expected behaviour of a complex heterogeneous and finite Markovian system is straightforward to evaluate numerically.

\section{Introduction}
\label{s1}
\subsection{Background}
The majority of epidemic models fall either into the category of stochastic models (Bailey 1975; Bartlett 1956) or into the category of deterministic differential equation-based models (Anderson and May 1991; Kermack and McKendrick 1927). These two strands developed largely independently for much of the twentieth century. Thus, an interesting question arises as to the precise mathematical connection between stochastic and deterministic models. Frequently, deterministic descriptions apply to large populations where the stochastic effects can be treated as negligible. For small populations we shall assume that it is the average or expected behaviour of the epidemic that we are hoping to replicate with ``deterministic'' descriptions. This average behaviour is a system characteristic that is fully specified by the system and its initial conditions. 

The first epidemic models were based on the assumption that populations are evenly mixed, with each individual equally likely to interact with any other individual at any time (Heathcote 2000). A classic example of this type of model is the Susceptible-Infectious-Removed (SIR) compartmental model whereby individuals are classified according to being in one of these three states. It has been shown that for this type of mean-field model, the average of many stochastic simulations (the expected outcome of the stochastic model) converges to the solution of the ``equivalent'' mean-field deterministic model in the limit of an infinite population size and subject to strict conditions regarding the initialisation of the epidemic (Kurtz 1970, 1971; Simon and Kiss 2011).

More recently, a higher degree of realism has been introduced by considering stochastic models on contact networks where individuals are only able to contact a limited subset of the population. This enables significant heterogeneity to be incorporated, treating individuals as distinct entities with fixed connectivity to pre-allocated neighbours. While stochastic models are readily extended to incorporate such systems, deterministic descriptions have been more problematic. Several methodologies have been developed including pair-approximation models (Keeling 1999; Keeling and Eames 2005; Rand 1999), degree-based models (Pastor-Satorras and Vespignani 2001), and models based on the probability generating function (PGF) formalism which are applicable to configuration networks (Volz 2008) as well as the related edge-based compartmental modelling (Miller et al. 2012; Miller and Volz 2012). It has been observed (House and Keeling 2011) that these models are, at some level, equivalent and are all derived from similar principles of independence. Although comparison with simulation of stochastic models can sometimes be good, the basic link remains obscure. 

Typically there are two idealised scenarios in which exact correspondence between stochastic models and solvable deterministic descriptions has been shown. Firstly, correspondence has been shown to sometimes occur in the limit of infinite populations for particular idealised graphs (Ball and Neal 2008; Decreusefond et al. 2012) which cannot be exactly realised in practice. It can also occur with some very simplified systems whose symmetry properties can be exploited to achieve reductions in the stochastic description (Keeling and Ross 2008; Simon et al. 2011). 

Here we consider a recently introduced class of model, related to the pair-approximation models, which give an exact correspondence between a deterministic description and the stochastic model for SIR epidemics on finite, time-invariant networks. Pair-approximation models were introduced into  network-based epidemic and ecological theory in the 1990s to describe large populations of interacting individuals (Matsuda et al. 1992; Sato et al. 1994; Harada and Iwasa 1994; Rand 1999; Keeling 1999). They are an example of a hierarchy of equations which are truncated at the second order by an approximation (truncation at the first order corresponds to mean-field). This type of hierarchy was first considered in statistical physics and is sometimes known as the Bogoliubov-Born-Green-Kirkwood-Yvon (BBGKY) hierarchy (Kirkwood 1946, 1947; Born and Green 1946). Recently, related models have been considered at the level of individuals, variously called subsystem equations, moment dynamics equations, pair-based equations (Sharkey 2008, 2011; Baker and Simpson 2010; Markham et al. 2013). This method generates a solvable class of models which can encompass a significant amount of heterogeneity and enables a fundamental link with finite stochastic models (Sharkey 2008, 2011). 

We consider a pair-based representation of Markovian SIR dynamics. We show that by considering subsystems at the level of pairs, a closure can be found that determines the expected infectious time series exactly for arbitrary network structures where the underlying graph is a tree and, in some special circumstances, for particular networks with cycles. We note that the recent, related message passing formulation of epidemics on contact networks developed by Karrer and Newman (2010) also enables an exact description of epidemic dynamics on finite tree graphs.

\subsection{Statement of the main result} \label{s2}

We consider an SIR compartmental model composed of $P$ individuals
whose states are described at any given point in time by vectors $I$ and $S$ with respective components $I_i$ and $S_i$, $i\in \{1,2,...,P\}$
such that $I_i=1$ if individual $i$ is infectious ($I_i=0$ otherwise) and $S_i=1$ if individual
$i$ is susceptible ($S_i=0$ otherwise). Transmission and recovery occur by Poisson processes with
rate parameters $\lambda_i=\sum_{j=1}^PT_{ij}I_jS_i$ and $\mu_i=\gamma_iI_i$, respectively 
where $T$ is a ``transmission'' matrix with (time-independent) elements $T_{ij}$ denoting the rate
parameter for an infectious node $j$ infecting a susceptible node $i$ ($T_{ii}=0$ for all $i$) and where $\gamma_i$ denotes the rate parameter for an infectious individual $i$ to recover, enabling individual-specific removal rates.

As shown by Sharkey (2011), for any transmission matrix $T$ and any nodes $i$, $j$
the following differential equations are provably exact (consistent with the stochastic model):
\begin{eqnarray}\nonumber
\dot{\la S_i\ra}&=&-\sum_{j=1}^P T_{ij}\la S_iI_j\ra, \\ \nonumber \dot{\la I_i\ra}&=&\sum_{j=1}^P
T_{ij}\la S_iI_j\ra -\gamma_i\la I_i\ra,\\ \nonumber
\dot{\la S_iI_j\ra}&=&\sum_{k=1,k\neq i}^P T_{jk}\la S_iS_jI_k\ra-\sum_{k=1, k\neq j}^P T_{ik}\la I_kS_iI_j\ra,    \\  \nonumber
&& -T_{ij}\la S_iI_j\ra-\gamma_j\la S_iI_j\ra, \\  
\dot{\la S_iS_j\ra}&=&-\sum_{k=1, k\neq j}^PT_{ik}\la I_kS_iS_j\ra-\sum_{k=1, k\neq i}^PT_{jk}\la
S_iS_jI_k\ra, \label{0.3}
\end{eqnarray}
where $\la S_i\ra$ and $\la I_i\ra$ denote the time-dependent probabilities (or equivalently the expected values of the indicator functions) for individual $i$ to
be susceptible and infectious, respectively, and expressions of the form $\la A_iB_j\ra$ denote the
time-dependent probability that individual $i$ is in state $A$ and individual $j$ is in state $B$ with a similar interpretation of terms of the form $\la A_iB_jC_k\ra$. Here and throughout, we adopt the dot notation to denote time derivatives. It follows that the expected population-level susceptible and infectious time series are given by $\sum_{i=1}^P\la S_i\ra$ and $\sum_{i=1}^P\la I_i\ra$ respectively.

Note that it is a short step (see Sharkey 2008) from (\ref{0.3}) to the familiar population-level pair equations (Keeling 1999; Keeling and Eames 2005; Rand 1999), also proved independently by Taylor et al. (2012) for the susceptible-infectious-susceptible variant.

This system can be completed by formulating differential equations for the triples, quadruples, and so forth until we reach the full system size. This yields a self-contained system of differential
equations that exactly determines the probabilities of each quantity given initial conditions. However, cascading these equations up to the full system size will usually result in a system that is impractical to solve due to its sheer size. This is why this system is typically closed at
some level by introducing a functional relation approximating higher-order probabilities in terms of lower-order ones. One of the most
frequently used closure relations can be written as \beq \la
A_iB_jC_k\ra\approx\frac{\la A_iB_j\ra\la B_jC_k\ra}{\la B_j\ra} \label{kw} \eeq for the current context. Applying this closure
relation to our system at the level of pairs we arrive at the following system:
\begin{eqnarray}\nonumber
\dot{\la X_i\ra}&=&-\sum_{j=1}^P T_{ij}\la X_iY_j\ra, \\ \nonumber \dot{\la Y_i\ra}&=&\sum_{j=1}^P
T_{ij}\la X_iY_j\ra -\gamma_i\la Y_i\ra,\\ \nonumber
\dot{\la X_iY_j\ra}&=&\sum_{k=1,k\neq i}^P T_{jk}\frac{\la X_iX_j\ra\la X_jY_k\ra}{\la X_j\ra}-\sum_{k=1,k\neq j}^P T_{ik}\frac{\la X_iY_k\ra\la X_iY_j\ra}{\la X_i\ra},    \\ \nonumber
&&-T_{ij}\la X_iY_j\ra-\gamma_j\la X_iY_j\ra, \\ \nonumber
\dot{\la X_iX_j\ra}&=&-\sum_{k=1,k\neq j}^PT_{ik}\frac{\la Y_kX_i\ra\la X_iX_j\ra}{\la
X_i\ra}-\sum_{k=1,k\neq i}^PT_{jk}\frac{\la X_iX_j\ra\la X_jY_k\ra}{\la X_j\ra}, \\ \label{0.2}
\end{eqnarray}
where we use $X$ for susceptible and $Y$ for infectious to emphasise that these are approximating
differential equations based on the closure. When $\la X_i\ra$ in the denominator is zero, we assume that the approximation takes the value zero.

In general, we consider networks (graphs) with directed and undirected edges. In what follows, we use the terminology ``tree graph'' to include graphs with directed edges where the underlying (equivalent undirected) graph is a tree. Our main aim is to show that when matrix $T$ represents a tree and the
system is initiated in a pure system state (that is, one of the $3^P$ possible configurations has
probability 1 at time $t=0$), then the system can be closed at the level of pairs such that the closure holds exactly.   Specifically, solving the closed system above, we obtain the same values for all marginal and pairwise-joint probabilities present in the unclosed system: $\la X_i\ra=\la S_i\ra$, $\la Y_i\ra=\la I_i\ra$, with similar equalities holding for the pairs.

In fact, we will prove the following theorem.

\begin{thm}
\noindent Let us assume the following:
\begin{itemize}
\item The graph (transmission network) is a tree (the underlying graph has no cycles).
\item The initial condition is a pure state, i.e. the system is initially in one of its $3^P$ possible configurations with probability 1.
\end{itemize}
Then the following relations hold: \beq \la S_j\ra\la S_iS_jI_k\ra=\la S_iS_j\ra\la S_jI_k\ra \nonumber\label{theoremMain_1}\eeq
for all $i\in\{1,2,...,P\}$ and for all $j$ with links towards $i$ and all $k$ with links towards $j$ : $i\neq k$;
\beq \la S_i\ra\la I_kS_iI_j\ra=\la I_kS_i\ra\la S_i I_j\ra \nonumber\label{theoremMain_2}\eeq for all $i\in\{1,2,...,P\}$
and for all $j$ and $k$ with links towards $i$: $j\neq k$. 
\label{theoremMain}
\end{thm}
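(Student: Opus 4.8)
The plan is to argue directly about the stochastic process whose moments the brackets $\langle\cdot\rangle$ denote, via a graphical (Harris-type) construction. I would attach to each node $u$ an independent exponential recovery clock of rate $\gamma_u$ and to each directed edge $u\to v$ an independent Poisson transmission clock $N_{u\to v}$ of rate $T_{vu}$, so that every indicator such as $S_i(t)$ or $I_j(t)$ becomes a deterministic function of these independent clocks together with the (deterministic) pure initial state, and each $\langle\cdot\rangle$ is simply an expectation over the clocks. Because the dynamics are SIR, the event $\{S_j(t)=1\}$ coincides with the monotone event $\{j\text{ never infected on }[0,t]\}$, and it is this reformulation that I would exploit throughout.

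For the first relation I would use the tree structure to separate the two outer nodes. Since $i$ and $k$ are distinct neighbours of $j$, node $j$ lies on the unique $i$--$k$ path, so deleting $j$ disconnects the tree; let $C_i$ and $C_k$ be the components containing $i$ and $k$. The key structural fact is that on $\{j\text{ never infected}\}$ node $j$ never transmits, and because the only edges leaving $C_i$ (resp.\ $C_k$) pass through $j$, the restrictions of the dynamics to $C_i$ and to $C_k$ evolve \emph{autonomously}, each driven solely by the clocks internal to its own component.

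The step I expect to be the crux is handling the conditioning correctly, because $\{j\text{ never infected}\}$ is itself \emph{not} independent of the two sides: whether $j$ escapes infection depends on when $i$ and $k$ become infectious. I would resolve this by re-expressing the event purely through the autonomous dynamics. Running each component in isolation assigns to every neighbour $u$ of $j$ an autonomous infectious period, and $\{j\text{ never infected}\}$ holds precisely when, for each such $u$, the clock $N_{u\to j}$ fails to fire during that period; this characterization factorizes over the components of $T\setminus\{j\}$ into independent sub-events $E_{C_i}$, $E_{C_k}$, $E_{\mathrm{rest}}$. On this event $S_i(t)$ and $I_k(t)$ agree with their autonomous versions and depend only on the data of $C_i$ and of $C_k$ respectively. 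Writing out the four moments $\langle S_j\rangle$, $\langle S_iS_j\rangle$, $\langle S_jI_k\rangle$, $\langle S_iS_jI_k\rangle$ using this independence, each becomes a product of the same component factors, and multiplying out shows both sides of the claimed identity equal a common product. Here the pure-state hypothesis is decisive: the clocks are independent by construction, so the only possible cross-component correlation comes from the initial condition, and a deterministic initial state removes it, making the data of $C_i$ and $C_k$ genuinely independent.

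The second relation follows from the identical argument with the central susceptible node now $i$: its distinct neighbours $j$ and $k$ lie in different components of $T\setminus\{i\}$, conditioning on $\{i\text{ never infected}\}$ makes $I_j(t)$ and $I_k(t)$ functions of independent component data, and multiplying out the corresponding four moments yields $\langle S_i\rangle\langle I_kS_iI_j\rangle=\langle I_kS_i\rangle\langle S_iI_j\rangle$. A purely analytic alternative would verify both identities at $t=0$ (immediate for a pure state) and then show they are preserved under the exact hierarchy (\ref{0.3}) by differentiating; I expect that route to be more calculation-intensive and to require a simultaneous induction over the entire family of such relations, so I would favour the probabilistic argument.
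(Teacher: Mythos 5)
Your proposal is correct, but it proves the theorem by a genuinely different route from the paper. The paper's proof is algebraic and stays inside the master-equation/ODE framework: it defines ``compatible pairs'' of motif states, shows (Lemma~\ref{lemma3}, built on the subsystem rate equations of Theorem~\ref{theorem1}) that the difference $\alpha_0=\la\psi_W^A\ra\la\psi_X^B\ra-\la\psi_Y^C\ra\la\psi_Z^D\ra$ of an order-$R$ compatible pair satisfies a linear ODE driven by order-$(R+1)$ compatible-pair differences, checks that all such differences vanish at $t=0$ for pure states (Lemma~\ref{prop10}), and then runs a downward induction from order $2P$ (Theorem~\ref{theorem2}); the stated theorem is the order-four corollary. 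This is exactly the ``purely analytic alternative'' you sketch and correctly predict to require a simultaneous induction over a whole family of relations. Your route instead is probabilistic, in the spirit of the Karrer--Newman message-passing argument that the paper cites: a Harris graphical construction, the observation that $\{S_j(t)=1\}$ is the event that $j$ is never infected, and the re-expression of that event through the autonomous (component-restricted) dynamics so that it factorizes over the components of the tree minus $j$ --- which is the genuine crux, since naive conditioning on $\{S_j(t)=1\}$ does not decouple the two sides, and your resolution via autonomous infectious periods is the correct one. Each approach buys something: yours is shorter, explains \emph{why} the closure is exact (conditional independence across a cut vertex), handles uncorrelated mixed initial states with no extra work (the paper's Remark), and --- since it never uses the Markov property, only independence of the per-edge and per-node randomness --- extends to non-exponential transmission and recovery times, where the paper's ODE machinery does not apply. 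The paper's induction buys more in a different direction: Theorem~\ref{theorem2} establishes exactness of compatible-pair identities of \emph{every} order, not just the pair-level closures, and it does so in the same formalism in which the closed system (\ref{0.2}) is posed, which is what makes the exactness of (\ref{0.2}) an immediate consequence. To make your argument fully rigorous you would need to prove the event identity $\{S_j(t)=1\}=\bigcap_{u\in N_j}E_u$ by a short first-infection-time argument, and dispose of the trivial case in which $j$ is not initially susceptible (then all four moments vanish); both are routine given your outline.
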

\begin{rem}
This theorem also holds for mixed (probabilistic) initial system states provided that the initial probabilities of the states of individuals in the system are uncorrelated. However, in general, mixed initial states cannot be represented exactly. 
\end{rem}
The theorem will be formulated in a more general context stating that even higher-order closure
relations are also exact. 

Figure~\ref{example}
\begin{figure}[htb]
    \centerline{\includegraphics[width=1\textwidth]{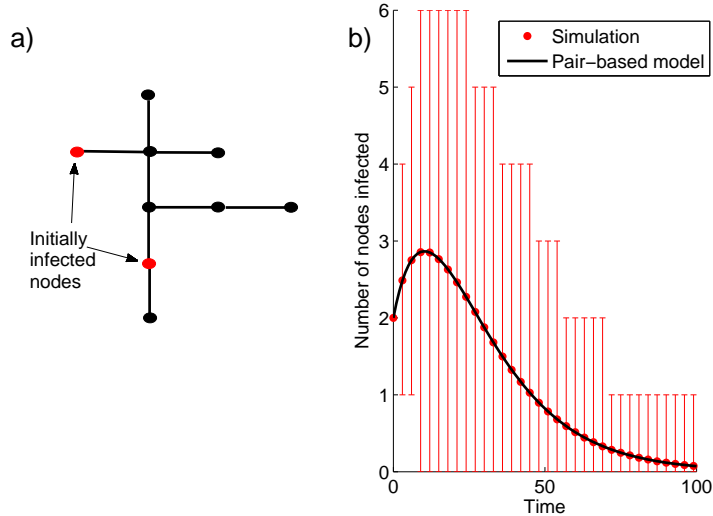}}
    \caption{a) An undirected tree indicating two nodes which we  infect to initiate epidemics, with all other nodes initially susceptible. b) The mean (dots) of 100,000 stochastic simulations on the network with transmission rate $\tau=0.1$ across each link and removal rate $\gamma=0.05$ for each node, with error bars denoting the 5th and 95th percentiles plotted together with the solution of (\ref{0.2}) (solid line) using the Matlab code published with Sharkey (2011)}
    \label{example}
\end{figure}
shows the numerical solution of (\ref{0.2}) for a small network of 9 nodes where it is clear that it is accurate to within the precision visible on the graph. Matlab code for solving the system of equations~(\ref{0.2}) is provided in Sharkey (2011). This code also works on networks which are not trees but is no longer exact in these cases. Cycles in the underlying graph of order three utilise the alternative closure $\la A_iB_jC_k\ra=\la A_iB_j\ra\la B_jC_k\ra\la
A_iC_k\ra/\la A_i\ra\la B_j\ra\la C_k\ra$ which is believed to gain increased accuracy in most circumstances, but these do not occur in the tree graphs considered in the present work.

The structure of the paper is as follows. Section~\ref{s0.01} introduces some notation which is needed to prove the result. This also contains an important theorem (Theorem~\ref{theorem1}) which specifies equations describing the probabilities of the states of arbitrary subsystems (the proof of this result is given in Appendix A). The relevant state space for our domain of a tree graph is then developed. Section 3 proves the main result, initially focusing on some special cases to help motivate and facilitate understanding of the main ideas and steps of the general proof in Section~\ref{sec3.5}. The main ingredient for the general proof is Lemma~\ref{lemma3} which is proved via Theorem~\ref{theorem1}. Theorem~\ref{theorem2} then follows easily by induction from Lemma~\ref{lemma3}. The theorem as stated above is a simple corollary of Theorem~\ref{theorem2}. In Section~\ref{s4.0} we discuss an application of the pair-based model to some special cases of graphs with cycles where it is also exact.

\section{Formulating the full system}
\label{s0.01}
In this section we introduce a new notation which will assist in formulating the set of
differential equations for the full system. In (\ref{0.3}) we formulated the differential
equations up to the level of pairs and we said that this could be continued up to the full system
level. This will be done formally here. In order to make the method clearer, using our existing notation let us first evaluate the full set of equations for the undirected line graph with three nodes which we refer to as the open triple, depicted in Figure~\ref{figm2}.
\begin{figure}[htb]
    \centerline{\includegraphics[width=0.3\textwidth]{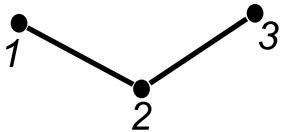}}
    \caption{Open triple graph}
    \label{figm2}
\end{figure}
Here we shall assume that the transmission rate parameter is $\tau$ across both links and that the removal rate is $\gamma$ for all three nodes. Firstly we write all of the single node equations
for this network. From (\ref{0.3}):
\begin{eqnarray}\nonumber
\dot{\la I_1\ra}&=&\tau\la S_1I_2\ra-\gamma\la I_1\ra, \\ \nonumber \dot{\la I_2\ra}&=&\tau\la I_1S_2\ra +\tau\la S_2I_3\ra -\gamma\la I_2\ra,\\
\dot{\la I_3\ra}&=&\tau\la I_2S_3\ra-\gamma\la I_3\ra, 
 \label{3.14}
\end{eqnarray}
and
\begin{eqnarray}\nonumber
\dot{\la S_1\ra}&=&-\tau\la
S_1I_2\ra, \\ \nonumber \dot{\la S_2\ra}&=&-\tau\la I_1S_2\ra-\tau\la S_2I_3\ra, \\ 
\dot{\la S_3\ra}&=&-\tau\la I_2S_3\ra. \label{3.141}
\end{eqnarray}
We also need to specify the following equations for pairs:
\begin{eqnarray}\nonumber
\dot{\la S_1I_2\ra}&=&\tau\la S_1S_2I_3\ra - \tau\la S_1I_2\ra-\gamma\la S_1I_2\ra, \\ \nonumber
\dot{\la I_1S_2\ra}&=&-\tau\la I_1S_2I_3\ra-\tau\la I_1S_2\ra-\gamma\la I_1S_2\ra, \\ \nonumber
\dot{\la S_2I_3\ra}&=&-\tau\la I_1S_2I_3\ra -\tau\la S_2I_3\ra-\gamma\la S_2I_3\ra, \\
\dot{\la I_2S_3\ra}&=&\tau \la I_1S_2S_3\ra-\tau\la I_2S_3\ra-\gamma\la I_2S_3\ra. \label{3.15}
\end{eqnarray}
Finally, at the triple level we have from the master equation (since the system has only three
nodes):
\begin{eqnarray}\nonumber
\dot{\la S_1S_2I_3\ra}&=&-\tau\la S_1S_2I_3\ra-\gamma\la S_1S_2I_3\ra, \\ \nonumber
\dot{\la I_1S_2I_3\ra}&=&-2\tau\la I_1S_2I_3\ra-2\gamma\la I_1S_2I_3\ra, \\
\dot{\la I_1S_2S_3\ra}&=&-\tau\la I_1S_2S_3\ra - \gamma\la I_1S_2S_3\ra. \label{3.16}
\end{eqnarray}

In order to formulate the full system for an arbitrary graph, we introduce notation for the
subsystem states.

\subsection{Notation for system and subsystem states} \label{s3}

In general, our stochastic system (which we denote by $\Gamma$) comprises of $P$ individuals, each
of which may be in any of the $S$, $I$ or $R$ states at any given time. In total, this corresponds
to $3^P$ possible states. Denoting these system states by $\Gamma^\alpha$,
$\alpha\in\{1,2,...,3^P\}$, the probabilities for each state are given by the master equation (or Kolmogorov equations): \beq
\dot{\la\Gamma^\alpha\ra}=\sum_{\beta=1}^{3^P}\sigma^{\alpha\beta}\la\Gamma^\beta\ra-\sum_{\beta=1}^{3^P}\sigma^{\beta\alpha}\la\Gamma^\alpha\ra,
\label{eq3.1} \eeq where $\sigma$ denotes a constant matrix of Poisson rate parameters. The master
equation completely describes our stochastic system using a set of $3^P$ ordinary differential
equations. Our overall objective is to show that (\ref{0.2}) is implied by the master
equation when $T$ represents a tree graph.

It is useful for us to define a general subsystem $\psi_W$ comprising of $r$ nodes in $\Gamma$
indexed by vector $W$ of length $r$: $W=(W_1,W_2,...,W_i,...,W_r)$, $W_i\in \{1,2,...,P\}$ where we can
assume $W_1<W_2<...<W_r$. We assume that the network connections of the nodes of $\psi_W$ are a
subset of the connections of $\Gamma$.

Let $\psi_W^A$ denote the state of subsystem $\psi_W$ where $A=(A_1,A_2,...,A_r)$ and
$A_i\in\{S,I,R\}$ $\forall i\in\{1,2,...,r\}$ is a sequence of $S,I$ and $R$ symbols of length $r$
such that the state of node $W_i$ is $A_i$. In terms of the notation of the previous section for
subsystems of single nodes and pairs of nodes, we have $S_i=\psi_i^S$, $I_i=\psi_i^I$,
$S_iI_j=\psi_{i,j}^{SI}$, $S_iS_jI_k=\psi_{i,j,k}^{SSI}$ etc. We shall use these two notations
interchangeably. We shall also sometimes treat indexing vectors such as $W$ as sets such that $n\in W$ means that the
node $n$ is in the subsystem $\psi_W$.

In general, although we can specify the states of each node with this type of notation, an important ambiguity remains because information about the network structure is not included. To remove
this ambiguity, the notation should normally be used in the context of a sketch of the relevant
network structure or where the network structure is clear from the context of its use (as in
(\ref{0.3})).

Let us now show how the differential equations of the different subsystem states can be formulated in general.

\subsection{Differential equations for subsystems} \label{s7}

Here we obtain differential equations describing the rate of change of the state of any subsystem. First we make some definitions.
\begin{defn}
A neighbour of node $i$ is a node with a network link directed towards $i$.
\end{defn}
\begin{defn}
$N_i$ denotes the set of neighbours of node $i$. That is: $T_{ij}\neq
0$ $\forall j\in N_i$.
\end{defn}

\begin{defn}
For the subsystem state $\psi_W^A$, if node $W_k$ is infectious then: \beq
h_{W_k}(\psi_W^A)=\psi_W^{A_1...A_{k-1}SA_{k+1}...A_r}. \nonumber\eeq Otherwise, $h_{W_k}(\psi_W^A)=\psi_W^A$.
\end{defn}
\begin{rem}
This operator changes the state of node $W_k$ in subsystem $\psi_W$ to $S$ if it is infectious. If
node $W_k$ is susceptible or removed then it leaves the state unchanged.
\end{rem}

\begin{defn}
For the subsystem state $\psi_W^A$ of $r$ nodes, a subsystem of $r+1$ nodes can be generated as
follows: Take $k\in\{1,2,...,r\}$ and take a neighbour $n$ of $W_k$ outside of the subsystem with
a network link towards $W_k$, i.e. let $n\in N_{W_k}$, $n\notin W$. If $A_k=S$, then the generated
subsystem state of $r+1$ nodes is given by the generating rule: \beq
g^n_{W_k}(\psi_W^A)=\psi^{A_1...A_rI}_{W_1,...,W_r,n},  \nonumber\eeq i.e. the subsystem is
extended by an infected at node $n$ which is connected towards $W_k$. If $A_k=I$, then the
generated subsystem state is given by: \beq
g^n_{W_k}(\psi_W^A)=\psi^{A_1...A_{k-1}SA_{k+1}...A_rI}_{W_1,...,W_r,n},  \nonumber\eeq i.e.
the subsystem is extended by an infected at node $n$ which is connected towards $W_k$ and the
state of node $W_k$ is changed from $I$ to $S$.

To complete the definition, if $A_k=R$ then the operator $g_{W_k}^n$ leaves the subsystem unchanged. We
also assume that for any state $A_k$ where there is no link from node $n$ to node $W_k$ in the transmission matrix $T$, then the subsystem is also left unchanged.
\label{GR}
\end{defn}
\begin{rem}
The generated order $r+1$ subsystem is obtained by replacing a susceptible or infectious node $W_k$
in the original subsystem by an $SI$ arc such that the $S$ node of the arc is put in the place
of the node $W_k$ and where the $I$ node of the arc is external to the subsystem.
\end{rem}
\begin{defn}
For the subsystem $\psi_W^A$, if node $W_k$ is removed then: \beq
f_{W_k}(\psi_W^A)=\psi_W^{A_1...A_{k-1}IA_{k+1}...A_r}. \nonumber\eeq Otherwise, $f_{W_k}(\psi_W^A)=\psi_W^A$.
\end{defn}

\begin{defn}
For any subsystem $\psi_W$ of $r$ nodes in state $\psi_W^A$ we define $D_k^{Aa}$ where $k\in\{1,2,...,r\}$ and $a\in\{S,I,R\}$ to have value 1 if $A_k=a$ and to have value zero otherwise: \beq D_k^{Aa}=\left\{
\begin{array}{ll}
1 & \textrm{if $A_k=a$,} \\
0 & \textrm{otherwise.}
\end{array}\right.
\nonumber \eeq 
\label{D_def}
\end{defn}

\begin{thm}
The rate of change of the probability of a subsystem state $\psi^A_W$ is:
\begin{eqnarray}\nonumber
\dot{\la\psi_W^A\ra}&=&\sum_{k=1}^r\left( 1-D_k^{AR}\right )\left [(-1)^{D_k^{AS}}\left (\sum_{n=1,n\notin W}^PT_{W_kn}\la g_{W_k}^n(\psi_W^A)\ra \right. \right. \\ \nonumber
&&+\left.\left. \sum_{l=1}^rT_{W_kW_l}D_l^{AI}\la h_{W_k}(\psi_W^A)\ra\right
)-D_k^{AI}\gamm_{W_k}\la\psi_W^A\ra\right ] \\ 
&&+\sum_{k=1}^rD_k^{AR}\gamm_{W_k}\la f_{W_k}(\psi_W^A)\ra . \label{14}
\end{eqnarray}
\label{theorem1}
\end{thm}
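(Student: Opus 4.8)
The plan is to derive equation~(\ref{14}) directly from the master equation~(\ref{eq3.1}) by writing the subsystem probability as a marginal and carefully tracking which elementary transitions contribute to its rate of change. First I would express $\la\psi_W^A\ra$ as the sum of $\la\Gamma^\alpha\ra$ over all full system states $\Gamma^\alpha$ whose restriction to the nodes of $W$ agrees with $A$; call this set $\mathcal{C}(\psi_W^A)$. Differentiating and substituting~(\ref{eq3.1}), the rate $\dot{\la\psi_W^A\ra}$ becomes a signed sum of transition rates $\sigma^{\alpha\beta}\la\Gamma^\beta\ra$ over pairs of full states. Because the dynamics are SIR and Markovian, every nonzero off-diagonal rate corresponds to exactly one node changing state, either by infection ($S\to I$) or by recovery ($I\to R$).

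The key structural observation is that any transition changing only nodes outside $W$ moves probability between two full states with the same restriction to $W$; such a pair is either both in $\mathcal{C}(\psi_W^A)$ or both outside it, so its net contribution to $\dot{\la\psi_W^A\ra}$ cancels. Hence only transitions that change the state of some node $W_k\in W$ survive, and I would organise the remaining terms by $W_k$ and by the value $A_k$. For infection of $W_k$ I would split the infecting neighbour into those inside $W$ and those outside: a neighbour $W_l\in W$ infectious in the source state contributes the factor $D_l^{AI}$ and, after summing over $\mathcal{C}$, produces the $h_{W_k}(\psi_W^A)$ term, whereas an external neighbour $n\notin W$ requires summing over all consistent full states with $n$ infectious, which is precisely $\la g_{W_k}^n(\psi_W^A)\ra$ by the definition of the generating rule.

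The sign $(-1)^{D_k^{AS}}$ encodes whether this is an outflow (when $A_k=S$) or an inflow creating the state (when $A_k=I$), and the generating rule's case distinction on $A_k$ automatically matches the correct source state: when $A_k=I$ the rule changes $W_k$ from $I$ to $S$, which is exactly the pre-infection configuration feeding the inflow. The recovery transitions are simpler, contributing the outflow $-D_k^{AI}\gamm_{W_k}\la\psi_W^A\ra$ when $A_k=I$ and the inflow $D_k^{AR}\gamm_{W_k}\la f_{W_k}(\psi_W^A)\ra$ when $A_k=R$, while the factor $(1-D_k^{AR})$ records that a removed node contributes no infection terms. Collecting these contributions reproduces~(\ref{14}).

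The main obstacle I anticipate is the bookkeeping for the infection terms: verifying that the external-neighbour sum collapses exactly to $\la g_{W_k}^n(\psi_W^A)\ra$ and that the single sign $(-1)^{D_k^{AS}}$ correctly unifies the inflow ($A_k=I$) and outflow ($A_k=S$) cases without double-counting. Confirming that the aggregated master-equation rate $\sum_j T_{W_kj}I_j$ for the transition $W_k\colon S\to I$ disaggregates cleanly into the stated per-neighbour terms---so that no infecting neighbour is counted twice and transitions involving only external nodes genuinely cancel---is the step that needs the most care.
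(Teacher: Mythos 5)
Your proposal is correct and follows essentially the same route as the paper's Appendix A proof: writing $\la\psi_W^A\ra$ as a marginal over consistent full system states, differentiating via the master equation~(\ref{eq3.1}), cancelling transitions that only involve nodes outside $W$, and classifying the surviving transitions by the node $W_k\in W$ that changes and by infection (internal neighbour giving the $h_{W_k}$ term, external neighbour giving the $g_{W_k}^n$ term) versus recovery. The paper merely formalises the bookkeeping you describe verbally---the cancellation, the disaggregation of rates into per-neighbour terms, and the collapse of sums onto $\la g_{W_k}^n(\psi_W^A)\ra$---using indicator functions $D$ and $\zeta$ and Propositions~\ref{prop3}--\ref{prop6}.
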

The proof of this theorem is a rather long diversion and can be found in Appendix A.

As an example of applying the theorem, we can use it to obtain the set of subsystem
equations~(\ref{0.3}) by considering each equation in turn:
\begin{itemize}
\item If the subsystem is a single susceptible individual $\psi_i^S$, then $r=1$ so $k$ can only take the value $k=1$ where $W_1=i$ and $A_1=S$, reducing (\ref{14}) to:
\beq \dot{\la\psi_i^S\ra}=-\sum_{n=1,n\neq i}^PT_{in}\la \psi_{i,n}^{SI}\ra. \nonumber\eeq The first term on
the second line of (\ref{14}) is zero because $T_{ii}=0$, and the other terms are zero because
$D_1^{SI}=0$ and $D_1^{SR}=0$.

\item For an infectious individual $\psi_i^I$ we obtain:
\beq \dot{\la\psi_i^I\ra}=\sum_{n=1,n\neq i}^PT_{in}\la \psi_{i,n}^{SI}\ra-\gamma_i\la\psi_i^I\ra,
\nonumber\eeq where the first term on the second line of (\ref{14}) is zero because $T_{ii}=0$ and the last term is zero because $D_1^{IR}=0$.
\item If the subsystem is the pair $\psi_{i,j}^{SI}$ then the sum over $k$ is over $k=1$ and $k=2$ and $W_1=i$, $W_2=j$, $A_1=S$, $A_2=I$ so:
\begin{eqnarray}\nonumber
\dot{\la \psi_{i,j}^{SI}\ra}&=&-\sum_{n=1,n\notin\{i,j\}}^PT_{in}\la\psi_{n,i,j}^{ISI}\ra -T_{ij}\la \psi_{i,j}^{SI}\ra \\ \nonumber
&&+\sum_{n=1,n\notin\{i,j\}}^PT_{jn}\la\psi_{i,j,n}^{SSI}\ra-\gamma_j\la\psi_{i,j}^{SI}\ra,
\end{eqnarray}
where the first line corresponds to $k=1$ and the second to $k=2$.
\item If the subsystem is the pair $\psi_{i,j}^{SS}$ then the sum is over $k=1$ and $k=2$ where $W_1=i$, $W_2=j$, $A_1=S$ and $A_2=S$ so:
\beq \dot{\la\psi_{i,j}^{SS}\ra}=-\sum_{n=1,n\notin \{i,j\}}^PT_{in}\la\psi_{n,i,j}^{ISS}\ra
-\sum_{n=1,n\notin\{i,j\}}^PT_{jn}\la\psi_{i,j,n}^{SSI}\ra,\nonumber \eeq where both terms come from the first
line of (\ref{14}).
\end{itemize}
We have therefore obtained (\ref{0.3}) in a slightly different notation (recall that $T_{ii}=0$ $\forall i\in\{1,2,...,P\}$).

\subsection{The state space for a tree graph} \label{s8}

Here we build up a state space which is sufficient to describe a tree graph. We first make
some definitions.
\begin{defn}
An $r$-motif is a subsystem of $\Gamma$ comprising of $r$ nodes and of network links such
that it forms a weakly connected network.
\end{defn}
\begin{defn}
An $r$-state is the state of an $r$-motif.
\end{defn}

The state space that we need to consider is built up inductively from the states of single nodes
by considering the infection process. Starting with the infected states of the single nodes
$\psi_i^I$, $i\in\{1,2,...,P\}$, (\ref{14}) shows that they depend on the 2-states
$\psi_{i,j}^{SI}$, $j\in N_i$ as described by the generating rule (Definition~\ref{GR}).

The differential equations for $\la\psi_{i,j}^{SI}\ra$ in turn contain the 3-states
$\psi_{i,j,k}^{SSI}$, $k\in N_j$ and $\psi_{k,i,j}^{ISI}$, $k\in N_i$. The differential equations
for the 3-states contain 4-states and typically, the differential equations for $r$-states contain
$(r+1)$-states for $r\in\{1,2,...,(P-1)\}$. This state generation process can continue until we
reach $P$-states which can only depend on other $P$-states.

Note that this process always forms subsystems which are motifs and that the motif states can
never include removed nodes.
\begin{defn}
An out-neighbour of node $i$ is a node with a network link from $i$ towards it.
\end{defn}
\begin{prop}
For a tree graph, if the out-neighbours of the $I$ nodes are all $S$ in an $r$-motif, then this is true for all $(r+1)$-motifs generated from this $r$-motif.
\label{prop7}
\end{prop}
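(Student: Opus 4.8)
The plan is to prove this inductive step directly, by checking the out-neighbour condition in the generated $(r+1)$-motif separately for the newly introduced infectious node and for the infectious nodes inherited from the original $r$-motif. Fix an $r$-motif in state $\psi_W^A$ satisfying the hypothesis, and let the generated $(r+1)$-motif be $g^n_{W_k}(\psi_W^A)$ for some $k\in\{1,\dots,r\}$ and some $n\in N_{W_k}$ with $n\notin W$. The cases $A_k=R$, and the case in which there is no link from $n$ to $W_k$, leave the motif unchanged by Definition~\ref{GR}, so there the conclusion is immediate and I would dispose of them first. By the generating rule, in both remaining cases ($A_k=S$ and $A_k=I$) the added node $n$ is infectious while node $W_k$ is susceptible in the generated state.

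The key structural observation I would isolate next is the one place the tree hypothesis enters. Since $\psi_W$ is weakly connected and $n\notin W$, the underlying graph contains exactly one edge between $n$ and the node set $W$, namely the edge to $W_k$: a second edge from $n$ to some $W_l\in W$ would, together with the path joining $W_k$ and $W_l$ inside the connected motif, close a cycle in the underlying graph, contradicting the tree assumption. Consequently, within the generated motif the only neighbour of $n$ is $W_k$, and since $n\in N_{W_k}$ the link runs $n\to W_k$, so the unique out-neighbour of $n$ inside the motif is $W_k$, which is susceptible. This verifies the condition for the new infectious node.

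It then remains to treat the infectious nodes inherited from $\psi_W^A$, namely the $I$-nodes of $A$ other than $W_k$ (note $W_k$ is susceptible in the generated state in either case, so it need not be considered). For such a node $u$ I would show that its set of in-motif out-neighbours, and all their states, are unchanged by the generation step. The added node $n$ is not an out-neighbour of $u$, because $W_k$ is the only motif-neighbour of $n$ and $u\neq W_k$. Moreover the only existing node whose state changes is $W_k$, and only in the case $A_k=I$ (from $I$ to $S$); but $W_k$ cannot be an out-neighbour of the infectious node $u$ in that case, since the hypothesis applied to $u$ would then already force $W_k$ to be susceptible in $\psi_W^A$, contradicting $A_k=I$. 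Hence every out-neighbour of $u$ retains its susceptible state, and the condition is preserved for $u$ as well, completing the inductive step.

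The main obstacle is precisely the single-edge property established in the second paragraph; everything else is routine bookkeeping over the two nontrivial cases of the generating rule. Its role is worth emphasising: on a graph with cycles the new infectious node $n$ could be adjacent to a second motif node that is itself infectious, so that introducing $n$ would create an infectious node with an infectious out-neighbour, breaking the invariant. The tree hypothesis is exactly what rules this out, which is why the proposition is stated for tree graphs.
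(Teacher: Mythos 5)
Your proof is correct and is essentially the argument the paper intends: the paper's entire proof of Proposition~\ref{prop7} is the single sentence ``This follows easily from the definition of the generating rule (Definition~\ref{GR})'', and your case analysis (trivial cases, the new node $n$, the inherited $I$ nodes) is exactly the verification that sentence leaves to the reader. The only substantive addition is that you make explicit where the tree hypothesis enters --- the new node $n$ can have at most one edge into the weakly connected motif, else a cycle would close --- which the paper leaves implicit but which, as your closing remark correctly observes, is genuinely needed, since on a graph with cycles the generating rule can attach an infectious $n$ adjacent to a second, infectious, motif node and break the invariant.
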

\begin{proof}
This follows easily from the definition of the generating rule (Definition~\ref{GR}).
\end{proof}

\begin{defn}
Consider a tree graph and take the 1-motifs with $I$ nodes: $\psi_i^I, i\in\{ 1,2,...,P\}$.
The ``basic state space'' $M$ is formed by these 1-states together with the set of motif states that
can be iteratively generated from them using the generating rule (Definition~\ref{GR}).
\end{defn}
\begin{rem}
Due to the method of its construction, the state space $M$ gives a self-contained system of
differential equations, i.e. the time derivatives of the probabilities of each motif state can be
expressed in terms of the probabilities of other motif states in the state space. An example in the case of the open triple is given by the motifs in (\ref{3.14}), (\ref{3.15}) and~(\ref{3.16}). 
\end{rem}

\begin{defn}
Consider a tree graph and the 1-states: $\{\psi_i^I,\psi_i^S : i\in\{1,...,P\}\}$ and the
2-states with $SS$, i.e. $\{\psi_{i,j}^{SS} : i\in\{1,...,P\}, j\in N_i\}$. The ``extended state
space'' $\bar{M}$ comprises of these motif states together with the set of motif states that can be
generated from them by repeated iteration of the generating rule.
\end{defn}
\begin{rem}
The extended state space is required to form the relevant closure relations. Due to the method of its construction, it is also self-contained.
\end{rem}
\begin{lem}
Let $\psi_W^A\in\bar{M}$. Then the out-neighbour of an $I$ node is an $S$ node in $\psi_W^A$. \label{lemma2}
\end{lem}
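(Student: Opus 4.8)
The plan is to prove this by induction on the number of nodes $r$ in the motif, recognising that the property asserted by the lemma is exactly the invariant that Proposition~\ref{prop7} shows to be preserved under the generating rule (Definition~\ref{GR}). Since, by definition, every element of $\bar{M}$ is obtained from one of its generating states by finitely many applications of the generating rule, it suffices to verify the property for those generators and then to propagate it upward one order at a time via Proposition~\ref{prop7}.

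First I would dispatch the base case. The generators of $\bar{M}$ are the $1$-states $\psi_i^I$, $\psi_i^S$ and the $2$-states $\psi_{i,j}^{SS}$. In each of these there is no $I$ node possessing an out-neighbour inside the motif: the states $\psi_i^S$ and $\psi_{i,j}^{SS}$ contain no $I$ node at all, while $\psi_i^I$ is a single node whose motif contains no second node. Hence the property ``the out-neighbour of an $I$ node is an $S$ node'' holds vacuously for every generating state of $\bar{M}$.

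For the inductive step, I would take any $\psi_W^A\in\bar{M}$ that is an $r$-motif for which the out-neighbour of every $I$ node is $S$, and consider an $(r+1)$-state obtained from it by a single application of $g^n_{W_k}$. Proposition~\ref{prop7} states precisely that, for a tree graph, this property is inherited by every $(r+1)$-motif so generated, so the inductive step is immediate once the base case is in hand. Assembling the two, the property propagates from the generators to all of $\bar{M}$, which is the claim.

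The substantive content — already isolated and discharged in Proposition~\ref{prop7} — is twofold, and I would simply record (rather than reprove) where it sits. The acyclicity of the tree is what guarantees that the newly attached infectious node $n$ connects to the existing motif only at $W_k$; a second connection would permit $n$ to be the out-neighbour of some distinct $I$ node, and the property could then fail. One must also note that in the case $A_k=I$, where the generating rule relabels $W_k$ from $I$ to $S$, no constraint is broken, since under the induction hypothesis an $I$ node $W_k$ could not already have been the out-neighbour of another $I$ node, so the relabelling only removes a constraint. Because Proposition~\ref{prop7} is established, the main obstacle is thus already resolved, and the remaining work is the routine confirmation that the generating states of $\bar{M}$ satisfy the property vacuously.
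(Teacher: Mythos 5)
Your proposal is correct and takes essentially the same approach as the paper: the paper's entire proof is ``Follows from Proposition~\ref{prop7},'' which is shorthand for precisely your induction --- the generating states $\psi_i^S$, $\psi_i^I$, $\psi_{i,j}^{SS}$ of $\bar{M}$ satisfy the property vacuously, and Proposition~\ref{prop7} propagates it through each application of the generating rule. You have simply made the base case and the inductive structure explicit.
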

\begin{proof}

Follows from Proposition~\ref{prop7}.
\end{proof}
\begin{lem}
For a tree graph, the equation for the time derivative of the probability of an $r$-state $\psi_W^A\in\bar{M}$ is given by:
\begin{eqnarray}\nonumber
\dot{\la\psi_W^A\ra}&=&\sum_{k=1}^r\left [(-1)^{D_k^{AS}}\sum_{n=1,n\notin W}^PT_{W_kn}\la g_{W_k}^n(\psi_W^A)\ra  \right. \\
&&-D_k^{AS}\left.
\sum_{l=1}^rT_{W_kW_l}D_l^{AI}\la\psi_W^A\ra-D_k^{AI}\gamm_{W_k}\la\psi_W^A\ra\right ]. \label{31}
\end{eqnarray}
\label{lemma1}
\end{lem}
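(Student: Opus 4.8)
The plan is to derive (\ref{31}) directly from the general subsystem equation (\ref{14}) of Theorem~\ref{theorem1}, specialised to a state $\psi_W^A\in\bar M$, by exploiting two structural features of the extended state space. The first is that no motif state in $\bar M$ contains a removed node: the seed states $\psi_i^I$, $\psi_i^S$ and $\psi_{i,j}^{SS}$ contain only $S$ and $I$ symbols, and the generating rule (Definition~\ref{GR}) only ever inserts an $I$ node or converts an $I$ into an $S$, so it can never produce an $R$. The second is Lemma~\ref{lemma2}, which guarantees that within any such state the out-neighbour of an $I$ node is $S$.

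Since $A_k\in\{S,I\}$ for every $k$, we have $D_k^{AR}=0$ throughout, and my first step is to insert this into (\ref{14}). The prefactor $(1-D_k^{AR})$ collapses to $1$, and the entire final sum $\sum_{k=1}^r D_k^{AR}\gamm_{W_k}\la f_{W_k}(\psi_W^A)\ra$ vanishes. This already brings (\ref{14}) into the shape of (\ref{31}) apart from the middle term involving $h_{W_k}$, so the only remaining task is to verify, for each $k$, that
\beq (-1)^{D_k^{AS}}\sum_{l=1}^rT_{W_kW_l}D_l^{AI}\la h_{W_k}(\psi_W^A)\ra &=& -D_k^{AS}\sum_{l=1}^rT_{W_kW_l}D_l^{AI}\la\psi_W^A\ra. \nonumber\eeq

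I would establish this by a case analysis on the state $A_k$ of node $W_k$. If $A_k=S$, then $D_k^{AS}=1$, so $(-1)^{D_k^{AS}}=-1$; moreover $h_{W_k}$ leaves a susceptible node unchanged, whence $h_{W_k}(\psi_W^A)=\psi_W^A$, and the two sides coincide immediately. The case $A_k=I$ is the crux of the argument, and the step I expect to be the main obstacle: here $(-1)^{D_k^{AS}}=+1$ whereas the right-hand side is identically zero because $D_k^{AS}=0$, so the identity can only hold if the sum $\sum_{l=1}^r T_{W_kW_l}D_l^{AI}$ itself vanishes.

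To see that it does, note that a nonzero summand would require some index $l$ with $A_l=I$ and $T_{W_kW_l}\neq 0$. But $T_{W_kW_l}\neq 0$ means there is a network link from $W_l$ towards $W_k$, so $W_k$ is an out-neighbour of the infectious node $W_l$ within the motif; Lemma~\ref{lemma2} then forces $A_k=S$, contradicting $A_k=I$. Hence no such $l$ exists, the sum is zero, and the middle term vanishes exactly as (\ref{31}) demands. Collecting the two cases shows that the $h_{W_k}$ term equals $-D_k^{AS}\sum_{l=1}^r T_{W_kW_l}D_l^{AI}\la\psi_W^A\ra$ for every $k$, and substituting back reproduces (\ref{31}). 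The tree hypothesis enters only through Lemma~\ref{lemma2} in this final case; everything else is bookkeeping with the indicator symbols $D_k^{Aa}$.
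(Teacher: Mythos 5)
Your proposal is correct and follows essentially the same route as the paper: specialise (\ref{14}) using $D_k^{AR}=0$ for states in $\bar{M}$, then dispose of the $h_{W_k}$ term by noting that $h_{W_k}$ is the identity when $A_k=S$, while for $A_k=I$ the sum $\sum_{l}T_{W_kW_l}D_l^{AI}$ vanishes because an $I$--$I$ adjacency would contradict Lemma~\ref{lemma2}. Your write-up merely makes explicit the case analysis that the paper's three-sentence proof compresses.
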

\begin{proof}
For these states we have $D_k^{AR}=0$ $\forall k\in \{1,2,...,r\}$. Additionally, when
$D_k^{AI}=1$, the first term on the second line of (\ref{14}) never arises because
$D_l^{AI}=1$ implies that an $I$ is connected to an $I$ node in r-state $\psi_W^A$ which contradicts Lemma~\ref{lemma2}.
Therefore (\ref{14}) reduces to (\ref{31}).
\end{proof}

Let us now formulate the exact closure relations and prove our main result.

\section{Closure relation and proof of the main result}
\label{s9}

The exactness of (\ref{0.2}) is straightforward to see provided that outbreaks of
epidemics are always initiated with a single infected individual. We prove this first before considering the general case.

\subsection{Proof for single initial infected} \label{s4}

When infection is initiated on a tree graph at a single individual, infection must always
proceed in linear chains. Consequently there is no possibility of the state $I_kS_jI_i$
illustrated in Figure~\ref{figm1}
\begin{figure}
    \centerline{\includegraphics[width=0.25\textwidth]{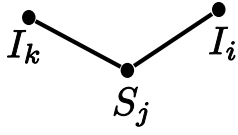}}
    \caption{Shown is a state which cannot arise on a tree graph where there is only one initially infectious node}
    \label{figm1}
\end{figure}
arising because an infection initiated at either $k$ or $i$ must pass through $j$ to get to the
other node. Furthermore, \beq \la S_jI_k\ra=\la S_iS_jI_k\ra+\la I_iS_jI_k\ra+\la R_iS_jI_k\ra,
\nonumber \eeq but since $\la I_iS_jI_k\ra=0$ and consequently $\la R_iS_jI_k\ra=0$, we have:
\beq \la S_jI_k\ra=\la S_iS_jI_k\ra \nonumber \eeq reducing (\ref{0.3}) to the following
closed system:
\begin{eqnarray}\nonumber
\dot{\la S_i\ra}&=&-\sum_{j=1}^P T_{ij}\la S_iI_j\ra, \\ \nonumber \dot{\la I_i\ra}&=&\sum_{j=1}^P
T_{ij}\la S_iI_j\ra -\gamma_i\la I_i\ra,\\ \nonumber
\dot{\la S_iI_j\ra}&=&\sum_{k=1, k\neq i}^P T_{jk}\la S_jI_k\ra-T_{ij}\la S_iI_j\ra-\gamma_j\la S_iI_j\ra, \\ \nonumber
\dot{\la S_iS_j\ra}&=&-\sum_{k=1, k\neq j}^PT_{ik}\la I_kS_i\ra-\sum_{k=1,k\neq i}^PT_{jk}\la
S_jI_k\ra. \label{3.13}
\end{eqnarray}
Similar arguments show that this can be written in the form of (\ref{0.2}).

More generally, this argument also applies to any tree graph where there is at most one network path by
which any susceptible individual in the network can become infectious from the initial
configuration of infected individuals.

Before discussing the general proof for any tree graph with multiple initially infected individuals, we consider
two very simple example networks which will serve to motivate and illustrate the method of proof.

\subsection{Proof for an open triple} \label{s5}

Here we consider the case for the open triple depicted in Figure~\ref{figm2}. The equations for the probabilities of the basic state space $M$ are given in (\ref{3.14}), (\ref{3.15}) and (\ref{3.16}). To form the relevant closure relations, we require the equations for the extended state space $\bar{M}$ formed by the equations for $M$ together with (\ref{3.141}),
\beq
\dot{\la S_1S_2\ra}=-\tau\la S_1S_2I_3\ra \;\;\;\; \textrm{    and    }\;\;\;\;\dot{\la S_2S_3\ra}=-\tau\la I_1S_2S_3\ra.    
\label{3.24}
\eeq
Our objective is to close the system at the level of pairs using the closure relation (\ref{kw}), eliminating the need for differential equations describing triples (\ref{3.16}), and show that the system remains exact. We note that the exactness of (\ref{0.2}) can be proved in this case along the lines of the previous argument by considering each possible initial condition separately; however, the approach discussed here will be more useful for understanding the general case.

We need to consider closures for the triples $\la I_1S_2I_3\ra$, $\la I_1S_2S_3\ra$ and $\la S_1S_2I_3\ra$. Let us consider the closure:
\beq \la I_1S_2I_3\ra\approx\frac{\la I_1S_2\ra\la S_2I_3\ra}{\la
S_2\ra}. \nonumber \eeq This is exact if $\alpha(t)=0$ where \beq \alpha(t)=\la S_2\ra\la
I_1S_2I_3\ra-\la I_1S_2\ra\la S_2I_3\ra \nonumber \eeq and $\la S_2\ra\neq 0$. Taking the derivative of $\alpha$ with
respect to time gives \beq \dot{\alpha}(t)=\dot{\la S_2\ra}\la I_1S_2I_3\ra+\la S_2\ra\dot{\la
I_1S_2I_3\ra}-\dot{\la I_1S_2\ra}\la S_2I_3\ra-\la I_1S_2\ra\dot{\la S_2I_3\ra}. \nonumber \eeq
Substituting the relevant derivatives in from (\ref{3.141})-(\ref{3.16}) and cancelling terms reduces this to
\beq \dot{\alpha}(t)=-2(\tau+\gamma )\alpha(t) , \nonumber\eeq so:
\beq
\alpha(t)=\alpha(0)e^{-2(\tau+\gamma)t}. \nonumber \eeq 
Now it is easily verified that provided
the system is initiated in a specific system state then $\alpha(0)=0$. Consequently $\alpha(t)=0$
for all $t\geq 0$ and the closure is exact.

By symmetry, it will suffice to consider one of the remaining two triples in (\ref{3.15}).
We wish to show that $\alpha(t)=0$ where \beq \alpha(t)=\la S_2\ra\la S_1S_2I_3\ra-\la
S_1S_2\ra\la S_2I_3\ra. \nonumber \eeq Here it is necessary to also use (\ref{3.24}) for pairs of type SS in the extended state space. This closure is not established immediately, but there is a two-step process to establishing that $\alpha(t)=0$ which the reader can verify by analogy with the example of the star graph in the next section.

\subsection{Proof for a star graph} \label{s6}

We now consider the case of the undirected star graph with $P=4$ shown in Figure~\ref{figm3},
\begin{figure}
    \centerline{\includegraphics[width=0.25\textwidth]{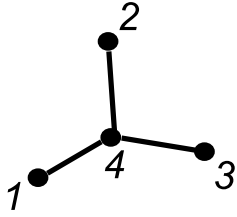}}
    \caption{Star graph with $P=4$ nodes}
    \label{figm3}
\end{figure}
where again we assume that the strength is the same across each network link and is denoted by
$\tau$ and the removal rate for each node is $\gamma$. Writing down the equations of the extended state space, there are two types of closure which need to be proved: one for the $S-S-I$ triples and one for the
$I-S-I$ triples (see (\ref{0.3})). The graph
has three triples $\left ( (1,4,3),(2,4,3),(1,4,2)\right )$, but it is sufficient to prove
exactness for one of them. Hence we want to prove the following two relations:
\begin{eqnarray}\nonumber
\la S_4\ra\la S_1I_3S_4\ra&=&\la S_1S_4\ra\la I_3S_4\ra, \\
\la S_4\ra\la I_1I_3S_4\ra&=&\la I_1S_4\ra\la I_3S_4\ra.
\label{4.11}
\end{eqnarray}
For brevity, we adopt the alternative notation:
\begin{eqnarray}\nonumber
\la\psi_4^S\ra\la\psi_{1,3,4}^{SIS}\ra-\la\psi_{1,4}^{SS}\ra\la\psi_{3,4}^{IS}\ra&=&0, \\ \nonumber
\la\psi_4^S\ra\la_{1,3,4}^{IIS}\ra-\la\psi_{1,4}^{IS}\ra\la\psi_{3,4}^{IS}\ra&=&0.
\end{eqnarray}
We introduce:
\beq
\alpha_1=\la\psi_4^S\ra\la\psi_{1,3,4}^{SIS}\ra-\la\psi_{1,4}^{SS}\ra\la\psi_{3,4}^{IS}\ra.
\nonumber
\eeq
By differentiating this, substituting in from the process equations and grouping terms, we obtain
\beq
\dot{\alpha}_1=-(\tau+\gamma)\alpha_1-\tau\alpha_2-\tau\alpha_3-\tau\alpha_4,
\label{4.14}
\eeq
where:
\begin{eqnarray}\nonumber
\alpha_2&=&\la\psi_{1,4}^{IS}\ra\la\psi_{1,3,4}^{SIS}\ra-\la\psi_{1,4}^{SS}\ra\la\psi_{1,3,4}^{IIS}\ra, \\ \nonumber
\alpha_3&=&\la\psi_{2,4}^{IS}\ra\la\psi_{1,3,4}^{SIS}\ra-\la\psi_{1,4}^{SS}\ra\la\psi_{2,3,4}^{IIS}\ra, \\\nonumber
\alpha_4&=&\la\psi_4^S\ra\la\psi_{1,2,3,4}^{SIIS}\ra-\la\psi_{1,2,4}^{SIS}\ra\la\psi_{3,4}^{IS}\ra.
\end{eqnarray}
Differentiating $\alpha_2$ we get
\beq
\dot{\alpha}_2=-2(\tau+\gamma )\alpha_2-\tau\alpha_5-\tau\alpha_6,
\label{4.16}
\eeq
where:
\begin{eqnarray}\nonumber
\alpha_5&=&\la\psi_{1,2,4}^{IIS}\ra\la\psi_{1,3,4}^{SIS}\ra-\la\psi_{1,2,4}^{SIS}\ra\la\psi_{1,3,4}^{IIS}\ra, \\ \nonumber
\alpha_6&=&\la\psi_{1,4}^{IS}\ra\la\psi_{1,2,3,4}^{SIIS}\ra-\la\psi_{1,4}^{SS}\ra\la\psi_{1,2,3,4}^{IIIS}\ra.
\end{eqnarray}
The derivatives of $\alpha_3$ and $\alpha_4$ can be obtained similarly.

Differentiating $\alpha_5$ we get
\beq
\dot{\alpha}_5=-3(\tau+\gamma )\alpha_5-\tau\alpha_7-\tau\alpha_8,
\label{4.18}
\eeq
where:
\begin{eqnarray}\nonumber
\alpha_7&=&\la\psi_{1,2,3,4}^{IIIS}\ra\la\psi_{1,3,4}^{SIS}\ra-\la\psi_{1,2,3,4}^{SIIS}\ra\la\psi_{1,3,4}^{IIS}\ra, \\ \nonumber
\alpha_8&=&\la\psi_{1,2,4}^{IIS}\ra\la\psi_{1,2,3,4}^{SIIS}\ra-\la\psi_{1,2,4}^{SIS}\ra\la\psi_{1,2,3,4}^{IIIS}\ra.
\end{eqnarray}
The derivative for $\alpha_6$ can also be obtained. Finally, differentiating $\alpha_7$ and $\alpha_8$ we obtain:
\beq
\dot{\alpha}_7=-4(\tau+\gamma)\alpha_7 \;\;\;\; \textrm{    and    } \;\;\;\; \dot{\alpha}_8=-4(\tau+\gamma)\alpha_8.
\label{4.21}
\eeq
To conclude the proof of the exactness of the closure, we first assume that the initial state is
not mixed; that is, one of the $3^4=81$ possible configurations has probability 1 at $t=0$. Then
it is easy to see that $\alpha_j(0)=0$ for all $j\in\{1,2,...,8\}$ (see Lemma~\ref{prop10} in
Section~\ref{sec3.5} for a proof in a more general context). Hence the differential equations for $\alpha_7$ and
$\alpha_8$ show that $\alpha_7(t)=0$ and $\alpha_8(t)=0$ for all $t\geq 0$. The differential equation
for $\alpha_5$ then implies that $\alpha_5=0$ $\forall t\geq 0$ (and similarly for $\alpha_6$). This
implies that $\alpha_2=0$ (and similarly $\alpha_3=\alpha_4=0$). The differential equation for
$\alpha_1$ shows that $\alpha_1=0$ which is what we wanted to show. The other triple closure in
(\ref{4.11}) can be proved similarly.
\begin{rem}
In fact, we have proved several closure relations $\alpha_j=0$ $\forall j\in \{1,2,...,8\}$.
\end{rem}

The closure relations each consist of two pairs which are visualised in Figure~\ref{fig0}. For
reference, we refer to these as the left pair and the right pair referring to their position in
this figure.
\begin{figure}
    \centerline{\includegraphics[width=0.75\textwidth]{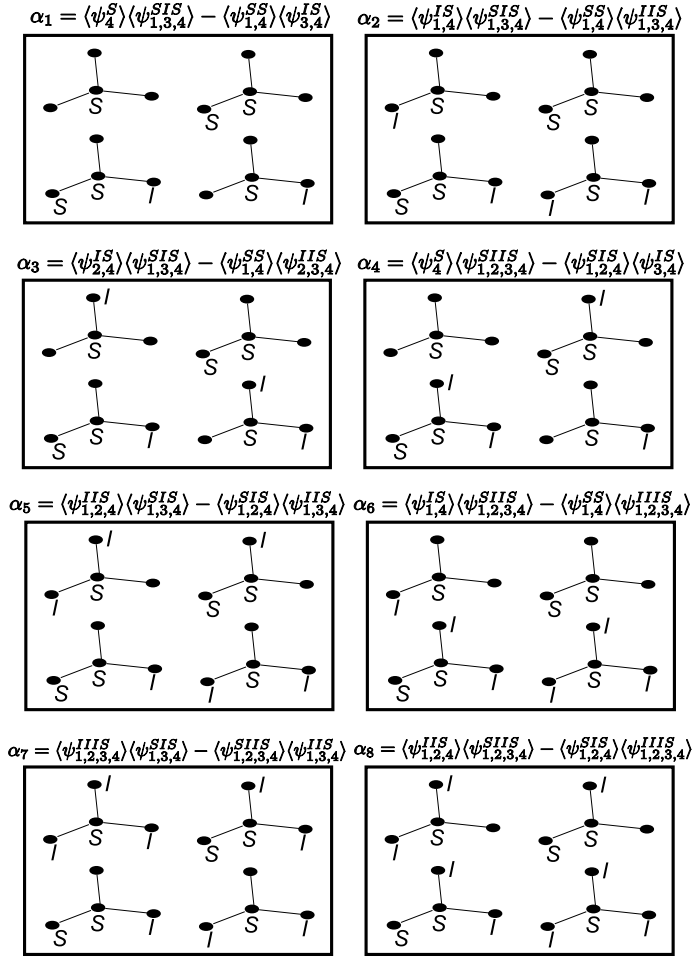}}
    \caption{Each box illustrates the relevant node states for the four parts of the closure relation in the equation above it. The node states on the left and the right correspond to the two terms in the closure relation. The node numbers correspond to the same positions as in Figure~\ref{figm3}}
\label{fig0}
\end{figure}
Looking at these closure relations, we can form two observations:
\begin{enumerate}
\item For a given node $i$, the number of times it appears as $S_i$ is the same in the left and the right pair, and similarly with the number of times it appears as $I_i$. For example, with $\alpha_5$, node 1 (the left node) has one $I$ and one $S$ for both pairs and node 4 (central node) has two $S$'s in both pairs. For $\alpha_6$, the number of $I$'s at nodes 1,2,3 and 4 is $(1,1,1,0)$ in both pairs and the number of $S$'s is given by $(1,0,0,2)$ in both pairs.
\item Any SI pairing on the left appears exactly the same number of times on the right. For example in $\alpha_7$, $I_1S_4$ appears once on the left and once on the right and $I_3S_4$ appears twice on the left and twice on the right. Observing that only $SS$ pairs and $IS$ pairs appear in the closure relations, a consequence is that $SS$ pairs also have this property.
\end{enumerate}
These observations will be of key importance for developing the general proof in the following two sections.

\subsection{General closure relations} \label{s9.1}

In general, to show that the closure relationship (\ref{kw}) is exact for the tree
graph, we need to show that $\alpha=0$ where 
\beq
 \alpha=\la B_j\ra\la A_iB_jC_k\ra-\la
A_iB_j\ra\la B_jC_k\ra,\nonumber
 \eeq 
 $B=S$ and $A,C\in \{S,I\}$.
Our proof of this is via induction using a sequence of closures
analogous to the proof in the case of the star graph in Section~\ref{s6}.

We shall consider many closure relations. In general we specify that they are composed of two pairs of motif states $(\psi_W^A,\psi_X^B)$ and $(\psi_Y^C,\psi_Z^D)$ and that the closure is exact if $\alpha=0$ where
\beq
\alpha=\la \psi_W^A\ra\la\psi_X^B\ra-\la\psi_Y^C\ra\la\psi_Z^D\ra.
\nonumber
\eeq

We formalise the observations we made about the closure relations for the star graph at the end of Section~\ref{s6} by defining what we term ``compatible pairs''.

\begin{defn}

\noindent For all $a\in \{S,I,R\} \textrm{ and } i\in \{1,2,...,P\}, \\ \psi_i^a\subset\psi_W^A\Leftrightarrow\exists j \textrm{  s.t.  } W_j=i\textrm{  and
} A_j=a$.

\noindent For all $a_1, a_2\in \{S,I,R\} \textrm{ and } i_1, i_2\in \{1,2,...,P\} : i_1\neq i_2, \\ \psi^{a_1a_2}_{i_1,i_2}\subset\psi_W^A\Leftrightarrow\exists j_1,j_2 \textrm{  s.t.  }
W_{j_1}=i_1,W_{j_2}=i_2\textrm{  and  } A_{j_1}=a_1,A_{j_2}=a_2$.
\end{defn}

\begin{rem}
In general, the notation $\psi_X^B\subset\psi_W^A$ denotes that the state of subsystem $\psi_X^B$ is implied by the state of subsystem $\psi_W^A$ because it is contained within it.
\end{rem}
\begin{defn}
\noindent Two pairs of motif states $(\psi_W^A,\psi_X^B)$ and $(\psi_Y^C,\psi_Z^D)$ are called compatible pairs if the following conditions are met:
\begin{itemize}
\item CP(i) $\left (\psi_i^a\subset\psi_W^A\textrm{ or }\psi_i^a\subset\psi_X^B\right )\Leftrightarrow\left (\psi_i^a\subset\psi_Y^C\textrm{ or }\psi_i^a\subset\psi_Z^D\right )$

\item CP(ii) $\left (\psi_i^a\subset\psi_W^A\textrm{ and }\psi_i^a\subset\psi_X^B\right )\Leftrightarrow\left (\psi_i^a\subset\psi_Y^C\textrm{ and }\psi_i^a\subset\psi_Z^D\right )$

\item CP(iii) $\left (\psi^{IS}_{i_1,i_2}\subset\psi_W^A\textrm{ or }\psi^{IS}_{i_1,i_2}\subset\psi_X^B\right )\Leftrightarrow\left (\psi^{IS}_{i_1,i_2}\subset\psi_Y^C\textrm{ or } \psi^{IS}_{i_1,i_2}\subset\psi_Z^D\right )$

\item CP(iv) $\left (\psi^{IS}_{i_1,i_2}\subset\psi_W^A\textrm{ and }\psi^{IS}_{i_1,i_2}\subset\psi_X^B\right )\Leftrightarrow\left (\psi^{IS}_{i_1,i_2}\subset\psi_Y^C\textrm{ and } \psi^{IS}_{i_1,i_2}\subset\psi_Z^D\right )$

\item CP(v) Same as CP(iii) and CP(iv) but with $SS$ pairs
\end{itemize}
where $a\in\{S,I,R\}$.
\end{defn}
\begin{defn}

Let $\psi_W^A$ be an $r$-state and $\psi_X^B$ be a $q$-state. Then the order of the pair $(\psi_W^A,\psi_X^B)$ is defined as $r+q$.
\end{defn}
\begin{prop}

If $(\psi_W^A,\psi_X^B)$ and $(\psi_Y^C,\psi_Z^D)$ are compatible pairs, then their order is equal.
\label{prop8}
\end{prop}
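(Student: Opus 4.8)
The plan is to reduce the proposition to a counting identity: the order of a pair is a pure node count, and I would express it as a double sum of node–state indicators that CP(i) and CP(ii) control exactly. For each node $i$ and state $a\in\{S,I,R\}$, let $w_i^a$ be the indicator that $\psi_i^a\subset\psi_W^A$ and $x_i^a$ the indicator that $\psi_i^a\subset\psi_X^B$. Since $\psi_W^A$ is an $r$-state, exactly $r$ nodes appear in it and each is in precisely one of the three states, so $r=\sum_{i=1}^P\sum_{a}w_i^a$, and likewise $q=\sum_{i=1}^P\sum_a x_i^a$. Hence the order of $(\psi_W^A,\psi_X^B)$ is
\[
r+q=\sum_{i=1}^P\sum_{a\in\{S,I,R\}}\left(w_i^a+x_i^a\right).
\]

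The key step I would use is the elementary Boolean identity $w+x=(w\vee x)+(w\wedge x)$, valid for any two indicators $w,x$ (verified by checking the four cases). Applying it termwise rewrites the order as
\[
r+q=\sum_{i=1}^P\sum_a\left[\left(w_i^a\vee x_i^a\right)+\left(w_i^a\wedge x_i^a\right)\right].
\]
The point of this decomposition is that $w_i^a\vee x_i^a$ is precisely the indicator on the left side of CP(i) and $w_i^a\wedge x_i^a$ is the indicator on the left side of CP(ii); the $\wedge$ term is exactly what accounts, with correct multiplicity, for a node shared between the two motifs of the pair.

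Defining the analogous indicators $y_i^a,z_i^a$ for $\psi_Y^C,\psi_Z^D$, the same computation gives the order of the second pair as
\[
|Y|+|Z|=\sum_{i=1}^P\sum_a\left[\left(y_i^a\vee z_i^a\right)+\left(y_i^a\wedge z_i^a\right)\right].
\]
Condition CP(i) asserts $w_i^a\vee x_i^a=y_i^a\vee z_i^a$ for every $i$ and $a$, and CP(ii) asserts $w_i^a\wedge x_i^a=y_i^a\wedge z_i^a$. Substituting these equalities termwise into the two sums yields $r+q=|Y|+|Z|$, which is the claim.

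I expect no serious obstacle: the result is a finite counting identity once the Boolean decomposition is in place. The only mild subtlety worth flagging is recognising that the order, being a node count rather than an edge count, is governed entirely by CP(i) and CP(ii); the remaining conditions CP(iii)--CP(v), which constrain $IS$ and $SS$ edge multiplicities, are irrelevant to this proposition and should be set aside so as not to complicate the argument.
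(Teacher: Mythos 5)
Your proof is correct and follows essentially the same route as the paper, whose entire proof of this proposition is the single line ``Follows from CP(i) and CP(ii).'' Your indicator decomposition $w+x=(w\vee x)+(w\wedge x)$ summed over nodes and states is a rigorous and accurate spelling-out of exactly that derivation, including the correct observation that CP(iii)--CP(v) play no role.
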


\begin{proof}

Follows from CP(i) and CP(ii).
\end{proof}
\begin{prop}

For a tree graph, applying the transformation $h_i$ to each of the four motif states in compatible pairs that contain node $i$ generates compatible pairs.
\label{prop9}
\end{prop}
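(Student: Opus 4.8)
The plan is to verify that each of the five conditions CP(i)--CP(v) continues to hold after the operator $h_i$ is applied to all four motif states, leaning on two structural facts about $\bar{M}$-motifs on a tree: such motifs contain no removed nodes, so every node present in a motif is either $S$ or $I$; and, by Lemma~\ref{lemma2}, no two adjacent nodes are simultaneously infected (in either edge orientation, an $I$--$I$ adjacency would make one node an out-neighbour of an infected node). Throughout I think of each of the pairs $(\psi_W^A,\psi_X^B)$ and $(\psi_Y^C,\psi_Z^D)$ as contributing a multiset of single-node states and of $IS$/$SS$ arc-states; conditions CP(i)--CP(ii) (respectively CP(iii)--CP(v)) then say exactly that the number of motifs carrying a given $\psi_i^a$ (respectively a given $IS$ or $SS$ pair) agrees between the two sides.

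\emph{The routine part.} Since $h_i$ alters only node $i$ --- turning an $I$ there into an $S$ and acting as the identity otherwise --- every condition referring solely to nodes $j\neq i$, and every condition referring to a pair that does not meet $i$, is untouched, because $h_i$ neither creates nor destroys arc-states among the nodes $\neq i$. For node $i$ itself the $I$-tally is transferred wholesale onto the $S$-tally, so after the transformation the new $S$-count equals the old $S$-count plus the old $I$-count, while the new $I$-count is zero; as each of these old tallies already matched across the two sides, CP(i) and CP(ii) survive. For a pair $\{i,j\}$ meeting $i$, the operator sends each arc $\psi_{i,j}^{IS}$ (the one in which $i$ is the infected endpoint) to an $SS$ arc, so the new $SS$-count on $\{i,j\}$ is the old $SS$-count plus the old count of $\psi_{i,j}^{IS}$, both of which match by CP(v) and CP(iii)--CP(iv); and the count of $IS$ arcs with $i$ infected drops to zero on both sides. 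These checks are immediate.

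\emph{The obstacle.} The single transition not controlled by CP alone is $\psi_{i,j}^{II}\mapsto\psi_{j,i}^{IS}$: a configuration with \emph{both} $i$ and $j$ infected becomes, once $i$ is made susceptible, a genuine tracked $IS$ arc. Hence the new count of $\psi_{j,i}^{IS}$ equals its old count plus the number of motifs in which $i$ and $j$ are simultaneously infected, and re-establishing CP(iii)--CP(iv) requires this co-occurrence count to agree across the two sides --- which the pairwise conditions do not by themselves guarantee. This is precisely where the tree structure must enter, and I expect it to be the crux of the proof. Because each motif is a connected subtree containing no removed nodes, any motif with $i=I$ and $j=I$ contains the entire $i$--$j$ path, and by Lemma~\ref{lemma2} the neighbour of each infected endpoint along that path is susceptible. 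I would establish the needed balance by induction on the tree-distance $d(i,j)$: for $d=1$ adjacency forbids the $II$ configuration, so the count is zero on both sides; for $d\geq 2$ I would peel off the neighbour $u$ of $j$ on the path towards $i$ (necessarily $S$ when $j=I$), reducing co-occurrence of two infecteds at distance $d$ to the already-matched $IS$-pair data at distance $d-1$. Closing this co-occurrence/connectedness bookkeeping cleanly --- rather than the surrounding condition-checking --- is the step I expect to be delicate. Once it is in hand, all five conditions hold for the four transformed motifs, so they again form compatible pairs.
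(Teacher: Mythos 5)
Your ``routine part'' is, in fact, the entire proof as the paper gives it: node tallies for CP(i)--(ii), conversion of arcs $\psi_{i,j}^{IS}$ (with $i$ infected) into $SS$ arcs for CP(v), and the exclusion of adjacent $II$ configurations by Lemma~\ref{lemma2} to rule out newly created $IS$ arcs. The gap lies in what you call the obstacle: you read CP(iii)--CP(v) as tracking \emph{all} node pairs, whereas they track only pairs joined by a network link. The paper's definition of $\psi_{i_1,i_2}^{a_1a_2}\subset\psi_W^A$ is admittedly silent on adjacency, but its usage settles the reading: in the star-graph example the pairs $(\psi_{1,2,3,4}^{IIIS},\psi_{1,3,4}^{SIS})$ and $(\psi_{1,2,3,4}^{SIIS},\psi_{1,3,4}^{IIS})$ (the closure $\alpha_7$) are treated as compatible even though the non-adjacent combination ``$2$ infected, $1$ susceptible'' occurs on the second side only; and in the proof of Lemma~\ref{lemma3}, CP(iii)--(v) are invoked only for edges (the order-$R$ transmission terms carry the factor $T_{W_kW_l}$, which vanishes off the edge set, and Subcase 1.2 speaks explicitly of ``the edge $n\rightarrow Y_y$''). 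Under the arc reading, a newly created $IS$ pair would require an $II$ \emph{arc}, which Lemma~\ref{lemma2} forbids; the proof ends at your ``$d=1$'' case.

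Worse, the $d\geq 2$ balance you propose to establish by induction is not merely unnecessary; it is false, so no induction can close it. Take the star graph with centre $4$ and leaves $1,2,3$, and the pairs $(\psi_{1,2,4}^{IIS},\psi_{3,4}^{IS})$ and $(\psi_{1,4}^{IS},\psi_{2,3,4}^{IIS})$. All four motifs lie in $\bar{M}$, and the pairs are compatible: each side carries each of the arcs $I_1S_4$, $I_2S_4$, $I_3S_4$ exactly once, there are no $SS$ arcs, and the node tallies agree. Yet nodes $1$ and $2$ are simultaneously infected in one motif on the left and in no motif on the right, so your co-occurrence count does not balance. Applying $h_1$ yields $(\psi_{1,2,4}^{SIS},\psi_{3,4}^{IS})$ and $(\psi_{1,4}^{SS},\psi_{2,3,4}^{IIS})$, which are again compatible in the arc sense (arcs $I_2S_4$, $I_3S_4$, $S_1S_4$ once on each side), but the non-adjacent combination ``$2$ infected, $1$ susceptible'' now occurs on the left only. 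So under your all-pairs reading Proposition~\ref{prop9} itself would be false, while under the intended arc reading your obstacle never arises: the short bookkeeping you relegated to the routine part, together with Lemma~\ref{lemma2}, is the whole proof.
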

\begin{proof}

The transformation satisfies CP(i) and CP(ii) because it replaces $\psi_i^a=\psi_i^I$ with $\psi_i^a=\psi_i^S$ which does not alter the form of the conditions. The transformation satisfies CP(iii) and CP(iv) because all $IS$ pairs where $i$ is the infected individual are removed by this transformation. New $IS$ pairs cannot be created by the transformation since this would require $II$ pairs which are prohibited for tree graphs by Lemma~\ref{lemma2}. CP(v) is satisfied because the transformation leaves existing $SS$ pairs unchanged and created $SS$ pairs result from existing $IS$ pairs so are balanced on each side.
\end{proof}

\subsection{Proof of the main result}
\label{sec3.5}
\begin{lem}

Let $(\psi_W^A,\psi_X^B)$ and $(\psi_Y^C,\psi_Z^D)$ be compatible pairs or order $R$ and $\psi_W^A,\psi_X^B,\psi_Y^C,\psi_Z^D\in\bar{M}$. Let
\beq
\alpha_0=\la\psi_W^A\ra\la\psi_X^B\ra-\la\psi_Y^C\ra\la\psi_Z^D\ra.
\nonumber
\eeq

Then:
\beq
\dot{\alpha_0}=\sum_{p=1}^mc_p\alpha_p+c_0\alpha_0,
\label{34}
\eeq
where each $\alpha_p$ can be expressed as
\beq
\alpha_p=\la \psi^{\bar{A}}_{\bar{W}}\ra\la\psi^{\bar{B}}_{\bar{X}}\ra-\la\psi^{\bar{C}}_{\bar{Y}}\ra\la\psi^{\bar{D}}_{\bar{Z}}\ra
\nonumber
\eeq
with $\psi^{\bar{A}}_{\bar{W}},\psi^{\bar{B}}_{\bar{X}}$ and $\psi^{\bar{C}}_{\bar{Y}},\psi^{\bar{D}}_{\bar{Z}}$ being compatible pairs of order $R+1$, and $c_0$, $c_p$ being constants and $m$ being an integer denoting the number of terms in the summation.
\label{lemma3}
\end{lem}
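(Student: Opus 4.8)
The plan is to differentiate $\alpha_0$ by the product rule and substitute the four time-derivatives using the reduced evolution equation~(\ref{31}) of Lemma~\ref{lemma1}, which is available precisely because all four states lie in $\bar M$ and hence, by Lemma~\ref{lemma2}, contain no $I$--$I$ arc. Writing $\dot{\alpha_0}=\dot{\la\psi_W^A\ra}\la\psi_X^B\ra+\la\psi_W^A\ra\dot{\la\psi_X^B\ra}-\dot{\la\psi_Y^C\ra}\la\psi_Z^D\ra-\la\psi_Y^C\ra\dot{\la\psi_Z^D\ra}$ and inserting~(\ref{31}), every resulting term is of one of two kinds: a ``diagonal'' term in which the differentiated state reappears unchanged (these come from the internal-infection term $-D_k^{AS}\sum_l T_{W_kW_l}D_l^{AI}\la\psi_W^A\ra$ and the recovery term $-D_k^{AI}\gamm_{W_k}\la\psi_W^A\ra$), or an ``order-raising'' term carrying a factor $\la g_{W_k}^n(\psi_W^A)\ra$ whose subsystem has grown by one node. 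I would treat these two families separately.

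For the diagonal terms, collecting coefficients shows their net contribution to $\dot{\alpha_0}$ is $(d_W+d_X)\la\psi_W^A\ra\la\psi_X^B\ra-(d_Y+d_Z)\la\psi_Y^C\ra\la\psi_Z^D\ra$, where for each state $d_W=-\sum_k D_k^{AS}\sum_l T_{W_kW_l}D_l^{AI}-\sum_k D_k^{AI}\gamm_{W_k}$ is minus the total internal $IS$-arc rate minus the total recovery rate of that state. The key point is that $d_W+d_X=d_Y+d_Z$: conditions CP(i) and CP(ii) force the multiset of single-node states to agree between the left and right pairs, so the total recovery rate matches, while CP(iii) and CP(iv) force the multiset of internal $IS$ arcs to agree, so the total internal-infection rate matches. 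This family therefore collapses to $c_0\alpha_0$ with $c_0:=d_W+d_X$.

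The order-raising terms are where the real work lies. Each is indexed by a node $m$ already present in one of the four states and an in-neighbour $n\notin$ that state, and carries the external rate $T_{mn}$; applying $g_m^n$ appends an infectious $n$ and, if $m$ was $I$, flips it to $S$ (so Lemma~\ref{lemma2} is preserved). I would group these terms by the arc $(n,m)$ and the rate $T_{mn}$, and within each group reassemble them into differences $\alpha_p=\la\psi_{\bar W}^{\bar A}\ra\la\psi_{\bar X}^{\bar B}\ra-\la\psi_{\bar Y}^{\bar C}\ra\la\psi_{\bar Z}^{\bar D}\ra$: a generation acting on a factor of the left pair is matched against the generation, at the same $m$ with the same external $n$, on the factor of the right pair in which $m$ plays the corresponding role, the two unused factors being left intact. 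The claim to verify is that each reassembled quadruple is again a pair of compatible pairs, now of order $R+1$. Since $g_m^n$ is, up to the shared addition of the infectious node $n$, the operator $h_m$ when $m$ is infectious, this verification should follow the pattern of Proposition~\ref{prop9}: appending the same $I_n$ to one factor on each side balances the new node-state and the single new $IS$ arc it creates, while the possible $I\to S$ flip of $m$ preserves CP(i)--CP(v) exactly as in that proposition, the tree hypothesis (no $II$ arcs, Lemma~\ref{lemma2}) preventing the creation of spurious $IS$ arcs. Order equality of the reassembled pairs is then immediate from Proposition~\ref{prop8}.

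The step I expect to be the main obstacle is this last reassembly: showing that the order-raising terms pair up exactly, with every generated term finding a unique partner carrying the matching coefficient, and that the resulting pairs are genuinely compatible. The difficulty is purely combinatorial bookkeeping, and it is most delicate when a node $m$ appears in both factors of a pair, or when generation occurs at an infectious node and triggers the flip; it is here that the acyclicity of the graph is essential. Once this is in place, assembling $\dot{\alpha_0}=c_0\alpha_0+\sum_{p=1}^m c_p\alpha_p$ with each $\alpha_p$ of order $R+1$ is straightforward.
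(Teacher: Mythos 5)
Your overall strategy is exactly the paper's: differentiate $\alpha_0$ by the product rule, substitute~(\ref{31}), absorb the recovery and internal-transmission (order-$R$) terms into $c_0\alpha_0$ via CP(i)--(iv), and pair the order-$(R+1)$ generating terms across the two sides to form the $\alpha_p$, invoking Proposition~\ref{prop9} when generation occurs at an infectious node. Your treatment of the diagonal terms is complete and correct, and matches the paper's computation of $c_0=-v-w$. But the part you defer as ``the main obstacle'' --- showing that every generating term $-T_{W_wn}\la g_{W_w}^n(\psi_W^A)\ra\la\psi_X^B\ra$ has a unique partner with the same coefficient on the other side, and that the reassembled quadruple is a compatible pair of order $R+1$ --- is precisely the substantive content of the paper's proof, and your proposal does not supply it. Saying you would ``group terms by the arc and match with the factor of the right pair in which $m$ plays the corresponding role'' is not yet an argument, because which factor that is cannot be read off directly: by CP(i)/(ii) you may assume $W_w\in Y$ with $C_y=S$, but whether the partner term comes from $\la\dot{\psi}_Y^C\ra\la\psi_Z^D\ra$ or from $\la\psi_Y^C\ra\la\dot{\psi}_Z^D\ra$ depends on where the external node $n$ sits, and this is where the CP conditions do their real work.

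The paper resolves this with a dichotomy you never state. Subcase 1: $n\notin Y$. Then the partner is $-T_{Y_yn}\la g_{Y_y}^n(\psi_Y^C)\ra\la\psi_Z^D\ra$, and compatibility of the new pairs holds because CP(i) gives $n\in Z\Leftrightarrow n\in X$ (as $n\notin W$), so the appended infectious node $n$ is balanced on the two sides. Subcase 2: $n\in Y$. Then no generation at $Y_y$ by $n$ exists (the generating rule requires $n\notin Y$); instead the edge from $n$ to $Y_y$ is an $SS$ or $IS$ edge of $\psi_Y^C$, and CP(iii)--(v) transfer it to $\psi_X^B$ (since $n\notin W$), giving $W_w\in X$ with $B_x=S$, whence CP(ii) gives $W_w\in Z$ with $D_z=S$; the partner is then $-T_{Z_zn}\la\psi_Y^C\ra\la g_{Z_z}^n(\psi_Z^D)\ra$. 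Uniqueness of the pairing, so that each order-$(R+1)$ term is used exactly once, follows from the symmetry of this construction. Without this case analysis your decomposition $\dot{\alpha_0}=c_0\alpha_0+\sum_pc_p\alpha_p$ remains a plausible claim rather than a proof; the rest of your outline, including the reduction of the $A_w=I$ case to the $A_w=S$ case via Proposition~\ref{prop9} on the tree, does match the paper.
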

\begin{rem}
This is a general statement of the forms of (\ref{4.14})-(\ref{4.21}) in the star graph example.
\end{rem}
\begin{proof}

Take the derivative of $\alpha_0$:
\beq
\dot{\alpha_0}=\la\dot{\psi}^A_W\ra\la\psi_X^B\ra +  \la\psi_W^A\ra\la\dot{\psi}^B_X\ra   -\la\dot{\psi}_Y^C\ra\la\psi_Z^D\ra-\la\psi_Y^C\ra\la\dot{\psi}^D_Z\ra.
\label{36}
\eeq
We consider the terms associated with removal, transmission terms of order $R$ and transmission terms of order $R+1$ separately. Firstly, from (\ref{31}), this derivative contains the following terms associated with the removal process:
\begin{eqnarray}\nonumber
&&-\sum_{k_1}D_{k_1}^{AI}\gamm_{W_{k_1}}\la\psi_W^A\ra\la\psi_X^B\ra-\sum_{k_2}D_{k_2}^{BI}\gamm_{X_{k_2}}\la\psi_W^A\ra\la\psi_X^B\ra \\ \nonumber
&&+\sum_{k_3}D_{k_3}^{CI}\gamm_{Y_{k_3}}\la\psi_Y^C\ra\la\psi_Z^D\ra+\sum_{k_4}D_{k_4}^{DI}\gamm_{Z_{k_4}}\la\psi_Y^C\ra\la\psi_Z^D\ra \\ \nonumber
&=&-v\alpha_0,
\end{eqnarray}
where the sums over $k_1,k_2,k_3,k_4$ are over all nodes in the motifs $\psi_W,\psi_X,\psi_Y,\psi_Z$ respectively and where
\beq
v=\sum_{k_1}D_{k_1}^{AI}\gamm_{W_{k_1}}+\sum_{k_2}D_{k_2}^{BI}\gamm_{X_{k_2}}=\sum_{k_3}D_{k_3}^{CI}\gamm_{Y_{k_3}}+\sum_{k_4}D_{k_4}^{DI}\gamm_{Z_{k_4}}
\nonumber
\eeq
is easily seen to follow from CP(i) and CP(ii).

The right-hand side of (\ref{36}) also contains the following transmission terms with motifs of order $R$:
\begin{eqnarray}\nonumber
&&-\sum_{k_1}D_{k_1}^{AS}\sum_{l_1}T_{W_{k_1}W_{l_1}}D_{l_1}^{AI}\la\psi_W^A\ra\la\psi_X^B\ra -\sum_{k_2}D_{k_2}^{BS} \sum_{l_2}T_{X_{k_2}X_{l_2}}D_{l_2}^{BI}\la\psi_W^A\ra\la\psi_X^B\ra \\ \nonumber
&&+\sum_{k_3}D_{k_3}^{CS} \sum_{l_3}T_{Y_{k_3}Y_{l_3}}D_{l_3}^{CI}\la\psi_Y^C\ra\la\psi_Z^D\ra +\sum_{k_4}D_{k_4}^{DS}\sum_{l_4}T_{Z_{k_4}Z_{l_4}}D_{l_4}^{DI}\la\psi_Y^C\ra\la\psi_Z^D\ra \\ \nonumber
&=&-wa_0,
\end{eqnarray}
where
\begin{eqnarray}\nonumber
w&=&\sum_{k_1}D_{k_1}^{AS}\sum_{l_1}T_{W_{k_1}W_{l_1}}D_{l_1}^{AI}+\sum_{k_2}D_{k_2}^{BS}\sum_{l_2}T_{X_{k_2}X_{l_2}}D_{l_2}^{BI} \\ \nonumber
&=&\sum_{k_3}D_{k_3}^{CS}\sum_{l_3}T_{Y_{k_3}Y_{l_3}}D_{l_3}^{CI}+\sum_{k_4}D_{k_4}^{DS}\sum_{l_4}T_{Z_{k_4}Z_{l_4}}D_{l_4}^{DI}
\end{eqnarray}
and where the sums over $k_1,k_2,k_3,k_4,l_1,l_2,l_3,l_4$ are over all nodes in each of the relevant motifs. This follows from CP(iii) and CP(iv). Hence the removal terms and transmission terms of order $R$ contribute $c_0a_0$ to the derivative of $a_0$ where $c_0=-v-w$.

For transmission terms with motifs of order $R+1$, consider the term $\la\dot{\psi}^A_W\ra\la\psi_X^B\ra$  in (\ref{36}). This gives rise to the following terms in the derivative of $\alpha_0$:
\beq
\sum_{k_1}(-1)^{D_{k_1}^{AS}}\sum_{n=1,n\notin W}^PT_{W_{k_1}n}\la g_{W_{k_1}}^n(\psi_W^A)\ra\la\psi_X^B\ra.
\nonumber
\eeq
To prove the lemma, it is sufficient to show that each term in this sum can be paired uniquely with a term in $\la\dot{\psi}_Y^C\ra\la\psi_Z^D\ra$ or in $\la\psi_Y^C\ra\la\dot{\psi}^D_Z\ra$, such that the difference of these terms forms $\alpha_p$. By symmetry this is a one-to-one pairing establishing that each term of order $R+1$ on the right-hand side of (\ref{36}) is accounted for exactly once in the sum of $\alpha_p$.

Let us take an element from the sum by choosing a node $W_w$, $w\in\{1,2,...,r\}$ and an outside neighbour node $n\in N_{W_w}$. This neighbour can either be in $\psi_X$ or outside. 

\noindent \it Case 1: $A_w=S$\rm

Consider first the case where $A_w$ is a susceptible node, resulting in the following term in the sum:
\beq
-T_{{W_w}n}\la g_{W_w}^n(\psi^A_W)\ra\la\psi_X^B\ra,
\nonumber
\eeq
where we can identify $c_p=-T_{{W_w}n}$.

We have $A_w=S$, $n\in N_{W_w}$ and $n\notin W$. Let us now identify a term in $\la\dot{\psi}_Y^C\ra\la\psi_Z^D\ra+\la\psi_Y^C\ra\la\dot{\psi}^D_Z\ra$ to form compatible pairs. According to CP(i) and CP(ii) we can assume without loss of generality that $W_w\in Y$, i.e. $\exists y : Y_y=W_w$ and $C_y=S$ (see Figure~\ref{fig1}).
\begin{figure}
\centerline{\includegraphics[width=.9\textwidth]{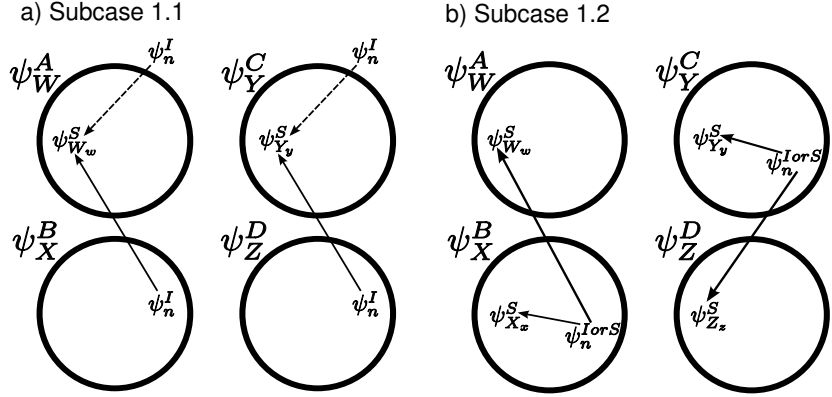}}
\caption{Each circle refers to one of the motif states $\psi_W^A,\psi_X^B,\psi_Y^C,\psi_Z^D$ specified to the top left. The position of the relevant node states with respect to the motif states are then illustrated. a) Subcase 1.1 ( $n\notin Y$). b) Subcase 1.2 ($n\in Y$)}
\label{fig1}
\end{figure}
There are two subcases:

\noindent\it Subcase 1.1: $n\notin Y$\rm

When $n\notin Y$, either $n\in Z$ or $n\notin Z$ and these are shown by solid and dashed lines respectively in Figure~\ref{fig1}a. By CP(i), $n\in Z\Leftrightarrow n\in X$ since $n\notin W$ so the solid lines match on the left and right pairs as do the dashed lines.
The corresponding term must therefore be $-T_{Y_yn}\la g_{Y_y}^n(\psi_Y^C)\ra\la\psi_Z^D\ra=c_p\la g_{Y_y}^n(\psi_Y^C)\ra\la\psi_Z^D\ra$ irrespective of whether $n\in Z$ or not. Hence:
\beq
\alpha_p=\la g_{W_w}^n(\psi_W^A)\ra\la\psi^B_X\ra-\la g_{Y_y}^n(\psi^C_Y)\ra\la\psi^D_Z\ra,
\nonumber
\eeq
where $(g_{W_w}^n(\psi^A_W),\psi^B_X)$ and $(g_{Y_y}^n(\psi^C_Y),\psi^D_Z)$ are easily seen to satisfy the definition of compatible pairs since the extra node is $n$ which is $I$ in both pairs.

\noindent\it Subcase 1.2: $n\in Y$\rm

If $n\in Y$, then the edge $n\rightarrow Y_y$ is an $SS$ or $IS$ edge in $C$. By CP(iii), CP(iv) and CP(v), it is also the same edge in $B$ because $n\notin W$. Hence $\exists x : X_x=W_w \textrm{ and } B_x=S$. By CP(ii), $W_w\in Z$ is also true where $\exists z : Z_z=W_w \textrm{ and } D_z=S$. This is illustrated in Figure~\ref{fig1}b.
We therefore have the corresponding term $-T_{Z_zn}\la\psi^C_Y\ra\la g_{Z_z}^n(\psi^D_Z)\ra=c_p\la\psi^C_Y\ra\la g_{Z_z}^n(\psi^D_Z)\ra$ and:
\beq
\alpha_p=\la g_{W_w}^n(\psi^A_W)\ra\la \psi^B_X\ra-\la \psi^C_Y\ra\la g_{Z_z}^n(\psi^D_Z)\ra,
\nonumber
\eeq
where again, the relevant pairs are seen to satisfy compatibility.

\noindent \it Case 2: $A_w=I$\rm

So far we have proved the existence of $\alpha_p$ when $A_w=S$, $n\in N_{W_w}$, $n\notin W$. Now we have to show $\alpha_p$ can be defined when $A_w=I$, $n\in N_{W_w}$, $n\notin W$. In this case, the motif generating rule firstly changes $A_w=I$ to $A_w=S$ and then applies the same generating rule as if $A_w=S$ initially. From Proposition~\ref{prop9}, applying the transformation to compatible pairs of order $R$ produces compatible pairs of order $R$ in the case of the tree graph. After this transformation, the argument runs identically to case 1. This completes the proof of Lemma~\ref{lemma3}.
\end{proof}
\begin{lem}
Assume that the initial condition is not mixed, i.e. $\exists A\in\{I,S\}^P$ such that $\la\psi_{1,2,...,P}^A\ra=1$. If $(\psi_W^A,\psi_X^B)$ and $(\psi_Y^C,\psi_Z^D)$ are compatible pairs, then for any graph, $\la\psi_W^A\ra\la\psi_X^B\ra-\la\psi_Y^C\ra\la\psi_Z^D\ra=0$ at $t=0$.
\label{prop10}
\end{lem}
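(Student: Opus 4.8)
The plan is to use the fact that at $t=0$ every motif-state probability collapses to a $\{0,1\}$-valued indicator determined by the single pure configuration. Write the initial configuration as $\Phi\in\{I,S\}^P$ with components $\Phi_i$, so that $\la\psi_{1,2,\dots,P}^\Phi\ra=1$ and every other full-system state has probability $0$ at $t=0$. Then for any motif state $\psi_V^E$ I would observe that at $t=0$
\[
\la\psi_V^E\ra=\prod_{i\in V}[\,\Phi_i=E_i\,],
\]
a product of indicators, since $\psi_V^E$ can carry positive probability only if all of its node states agree with $\Phi$. Note that any $E_i=R$ makes the corresponding factor vanish because $\Phi$ has no removed nodes, so removed states require no separate treatment.

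Next I would express each side of the claimed identity as a single product of indicators. For the left pair the product is nonzero only when both factors equal $1$, so at $t=0$
\[
\la\psi_W^A\ra\la\psi_X^B\ra=\prod_{\psi_i^a\subset\psi_W^A\ \textrm{or}\ \psi_i^a\subset\psi_X^B}[\,\Phi_i=a\,],
\]
where idempotence of indicators lets me range over the distinct single-node states carried by the two motifs; the analogous formula holds for the right pair with the condition $\psi_i^a\subset\psi_Y^C$ or $\psi_i^a\subset\psi_Z^D$. The key structural point is that the product of two indicator probabilities equals $1$ precisely when the \emph{union} of the single-node constraints of the two motifs is met by $\Phi$, and this union is exactly the disjunction appearing in the compatibility conditions.

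It then remains to apply CP(i), which asserts that $\psi_i^a\subset\psi_W^A$ or $\psi_i^a\subset\psi_X^B$ holds if and only if $\psi_i^a\subset\psi_Y^C$ or $\psi_i^a\subset\psi_Z^D$. Hence the index set of the left product coincides with that of the right product, the two products agree, and their difference is zero. Notably only CP(i) is used; the remaining conditions CP(ii)--CP(v) play no part at the initial time, and the argument is valid for any graph. The one place to take care --- and the nearest thing to an obstacle --- is when a node appears with conflicting states across the two motifs of a pair (say $S$ in $\psi_W^A$ and $I$ in $\psi_X^B$): the left product is then forced to $0$, but both $\psi_i^S$ and $\psi_i^I$ lie in the left union, so by CP(i) they lie in the right union too and force the right product to $0$ as well. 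Thus conflicts, like removed states, are handled uniformly by the indicator representation, completing the proof.
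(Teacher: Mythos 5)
Your proof is correct and follows essentially the same route as the paper's: under a pure initial condition every motif probability is a $\{0,1\}$-valued indicator of agreement with the initial configuration, and CP(i) forces the left and right products to impose the same set of single-node constraints. The only difference is presentational --- the paper treats the cases ``product $=1$'' and ``product $=0$'' separately (citing CP(i) and CP(ii)), whereas your indicator-product formulation handles both at once and makes explicit that CP(i) alone suffices.
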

\begin{proof}
Assume that $\la\psi_W^A\ra\la\psi_X^B\ra=1$. Then by CP(i) and CP(ii), for all $\psi_i^a\subset\psi_W^A$, we must have $\psi_i^a\subset\psi_Y^C$ and/or $\psi_i^a\subset\psi_Z^D$. This is also true for all $\psi_i^a\subset\psi_X^B$ and, by the symmetry between the compatible pairs, it follows that $\la\psi_Y^C\ra\la\psi_Z^D\ra=1$. Similarly, it follows that $\la\psi_W^A\ra\la\psi_X^B\ra=0$ implies that $\la\psi_Y^C\ra\la\psi_Z^D\ra=0$.
\end{proof}

\begin{thm}

\noindent Let us assume the following:
\begin{itemize}
\item The graph is a tree.
\item The initial condition is not mixed.
\item $(\psi_W^A,\psi_X^B)$ and $(\psi_Y^C,\psi_Z^D)$ are compatible pairs.
\end{itemize}
Then $\la\psi_W^A\ra\la\psi_X^B\ra-\la\psi_Y^C\ra\la\psi_Z^D\ra=0$ for all time $t\geq0$.
\label{theorem2}
\end{thm}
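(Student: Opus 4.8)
The plan is to prove Theorem~\ref{theorem2} by downward induction on the order $R$ of the compatible pairs, with the two hard parts already isolated as lemmas. The structural input is Lemma~\ref{lemma3}, which writes $\dot{\alpha_0}=c_0\alpha_0+\sum_{p=1}^m c_p\alpha_p$ with $c_0,c_p$ constant and each $\alpha_p$ the difference associated with \emph{compatible pairs of order $R+1$}; the initialisation input is Lemma~\ref{prop10}, which gives $\alpha_0(0)=0$ for every compatible pair whenever the initial state is pure. The observation that makes a downward induction well-founded is that the order is bounded: since every motif lives on the $P$-node system, each of $\psi_W^A,\psi_X^B$ has at most $P$ nodes, so any pair has order $R\le 2P$. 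Throughout I work inside the extended state space $\bar M$, which is self-contained under the generating rule, so all pairs arising in the cascade remain in $\bar M$ and Lemma~\ref{lemma3} stays applicable.

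First I would settle the base case $R=2P$. Here $r+q=2P$ with $r,q\le P$ forces $r=q=P$, so both motifs are full $P$-node motifs and the generating rule $g_{W_k}^n$ admits no external neighbour $n\notin W$; equivalently, the order-$(R+1)$ terms of Lemma~\ref{lemma3} would be compatible pairs of order $2P+1$, which do not exist, so the sum is empty. Thus $\dot{\alpha_0}=c_0\alpha_0$ with $c_0$ constant, and with $\alpha_0(0)=0$ from Lemma~\ref{prop10} the unique solution of this scalar linear homogeneous ODE is $\alpha_0(t)=0$ for all $t\ge 0$.

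For the inductive step I would assume the conclusion for all compatible pairs of every order strictly greater than $R$, and take a compatible pair of order $R$. Lemma~\ref{lemma3} gives $\dot{\alpha_0}=c_0\alpha_0+\sum_{p=1}^m c_p\alpha_p$, where each $\alpha_p$ is built from compatible pairs of order $R+1$ whose constituent motifs, being generated from motifs of $\bar M$ by the generating rule, again lie in $\bar M$. By the inductive hypothesis $\alpha_p(t)=0$ for all $p$ and all $t$, so the relation collapses to $\dot{\alpha_0}=c_0\alpha_0$; since $\alpha_0(0)=0$ by Lemma~\ref{prop10}, uniqueness forces $\alpha_0(t)=0$ for all $t\ge 0$. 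This completes the induction, and the theorem follows by taking $\alpha_0=\la\psi_W^A\ra\la\psi_X^B\ra-\la\psi_Y^C\ra\la\psi_Z^D\ra$.

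I expect the only delicate point to be the termination of the hierarchy — confirming that the order-$(R+1)$ contributions really disappear at the top rather than producing an unbounded cascade. This is exactly where finiteness of the network enters: the generating rule enlarges a motif only by attaching an external neighbour and therefore cannot act once a motif saturates all $P$ nodes, which caps the order at $2P$ and supplies the base case. The secondary check is bookkeeping rather than difficulty: one must verify that every $\alpha_p$ produced by Lemma~\ref{lemma3} stays within $\bar M$, so that Lemma~\ref{lemma3} and Lemma~\ref{prop10} may legitimately be reapplied level by level; this is guaranteed by the closure of $\bar M$ under the generating rule.
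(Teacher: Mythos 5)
Your proposal is correct and follows essentially the same route as the paper's own proof: a downward induction on the order of compatible pairs, with Lemma~\ref{lemma3} supplying the cascade $\dot{\alpha_0}=c_0\alpha_0+\sum_{p}c_p\alpha_p$, Lemma~\ref{prop10} supplying $\alpha_0(0)=0$, and the base case of order $2P$ where the summation terms are absent so that $\dot{\alpha_0}=c_0\alpha_0$ forces $\alpha_0\equiv 0$. Your two additional remarks---that the order is capped at $2P$ because each motif has at most $P$ nodes, and that all motifs generated in the cascade remain in $\bar{M}$ so the lemmas can be reapplied at every level---simply make explicit points the paper leaves implicit.
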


\begin{proof} We prove the theorem by induction according to the order of the closure. This is analogous to the proof for the star graph in Section~\ref{s6}.

\noindent Step 1

If the closure is of order $2P$, then it is exact. More precisely, if $\psi_W^A$, $\psi_X^B$, $\psi_Y^C$ and $\psi_Z^D$ are $P$-states, then (\ref{34}) does not contain the summation terms and becomes:
\beq
\dot{\alpha_0}=c_0\alpha_0.
\nonumber
\eeq
Since we start from an initial condition that is not mixed, we have (by Lemma~\ref{prop10}) $\alpha_0(0)=0\Rightarrow \alpha_0(t)=0$ $\forall t\geq 0$.

\noindent Step 2

Assume that the theorem is proved for compatible pairs of order $R+1$. We prove that it is true for compatible pairs of order $R$. Applying Lemma~\ref{lemma3}, we have:
\beq
\dot{\alpha_0}=\sum_{p=1}^mc_p\alpha_p+c_0\alpha_0.
\nonumber
\eeq

According to the induction condition, $\alpha_p=0$ $\forall p$ because these are compatible pairs of order $R+1$. Therefore $\dot{\alpha_0}=c_0\alpha_0$. From Lemma~\ref{prop10}, $\alpha_0(0)=0$ so $\alpha_0(t)=0$ $\forall t\geq 0$. Since we have proved the result for compatible pairs of order $2P$ then we have completed the proof of the theorem.
\end{proof}

The lowest-order compatible pairs are of order four. The closure relation corresponding to these pairs is formulated in the following important corollary.
\begin{cor}

Under the assumptions on the graph and the initial conditions in Theorem~\ref{theorem2}, we have the special cases:
\beq
\la\psi_j^S\ra\la\psi_{i,j,\,k}^{SSI}\ra=\la\psi_{i,j}^{SS}\ra\la_{j,\,k}^{SI}\ra
\nonumber\eeq
for all $i\in\{1,2,...,P\}$ and for all $j\in N_i$, $k\in N_j$: $i\neq k$;
\beq
\la\psi_i^S\ra\la\psi_{k\!,\,i, j}^{ISI}\ra=\la\psi_{k\!,\,i}^{IS}\ra\la\psi_{i,j}^{SI}\ra
\nonumber\eeq
for all $i\in\{1,2,...,P\}$ and for all $j,k\in N_i$: $j\neq k$.

\end{cor}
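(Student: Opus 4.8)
The plan is to observe that each of the two displayed identities is precisely the vanishing of an expression of the form $\la\psi_W^A\ra\la\psi_X^B\ra-\la\psi_Y^C\ra\la\psi_Z^D\ra$ for a suitable choice of four motif states, and then to invoke Theorem~\ref{theorem2} directly. For the first identity I would take the left pair $(\psi_j^S,\psi_{i,j,k}^{SSI})$ and the right pair $(\psi_{i,j}^{SS},\psi_{j,k}^{SI})$; for the second identity I would take the left pair $(\psi_i^S,\psi_{k,i,j}^{ISI})$ and the right pair $(\psi_{k,i}^{IS},\psi_{i,j}^{SI})$. In both cases the two pairs have order $1+3=4$ and $2+2=4$, consistent with Proposition~\ref{prop8}, so these are the lowest-order nontrivial pairs.

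Before applying the theorem I would check that all four motif states in each case lie in the extended state space $\bar M$. This is immediate from the construction of $\bar M$: the $1$-states $\psi_j^S,\psi_i^S$ and the $SS$-pair $\psi_{i,j}^{SS}$ are base states, while $\psi_{j,k}^{SI}$, $\psi_{i,j}^{SI}$ and $\psi_{k,i}^{IS}$ arise from a $1$-state by a single application of the generating rule (Definition~\ref{GR}), and $\psi_{i,j,k}^{SSI}$, $\psi_{k,i,j}^{ISI}$ arise by one further application from the $SS$-pair and from an $SI$-pair respectively. Here the hypotheses $j\in N_i$, $k\in N_j$ (first identity) and $j,k\in N_i$ (second identity) ensure that the required links exist, so that the generating rule genuinely produces these states.

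The substance of the argument is verifying that each exhibited pair-of-pairs satisfies the compatibility conditions CP(i)--CP(v). Conditions CP(i) and CP(ii) reduce to a node-by-node count: each of $i,j,k$ carries the same state on the two sides, and appears in exactly one of the two motifs of each pair, with the sole exception of the central node ($j$ in the first identity, $i$ in the second), which appears in both motifs of both pairs. The delicate point is CP(iii)--CP(v), which concern $IS$- and $SS$-edges. Because the underlying graph is a tree, the three nodes span only the underlying path $i$--$j$--$k$ (first identity) resp. $j$--$i$--$k$ (second identity) with \emph{no} chord: there is no edge $i$--$k$ in the first case and no edge $j$--$k$ in the second. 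Consequently the only edges present anywhere are the two path edges, and one checks that each such $IS$- or $SS$-edge occurs in exactly one motif of the left pair and in exactly one motif of the right pair, so CP(iii)--CP(v) hold.

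The main obstacle is exactly this edge-bookkeeping in CP(iii)--CP(v), and it is the only place the tree hypothesis enters. Were the graph to contain a cycle through $i,j,k$, the chord edge (for instance an $I$--$S$ edge $k$--$i$ in the first identity) would appear in the left pair but have no counterpart in the right pair, violating CP(iii) and destroying the correspondence; acyclicity of the underlying graph is precisely what excludes this. Once compatibility is established, Theorem~\ref{theorem2} applies verbatim and yields $\la\psi_W^A\ra\la\psi_X^B\ra=\la\psi_Y^C\ra\la\psi_Z^D\ra$ for all $t\ge 0$, which are the two stated identities.
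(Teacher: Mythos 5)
Your proposal is correct and follows essentially the paper's own route: the corollary is simply Theorem~\ref{theorem2} specialized to the lowest-order (order-four) compatible pairs $(\psi_j^S,\psi_{i,j,k}^{SSI})$, $(\psi_{i,j}^{SS},\psi_{j,k}^{SI})$ and $(\psi_i^S,\psi_{k,i,j}^{ISI})$, $(\psi_{k,i}^{IS},\psi_{i,j}^{SI})$. The paper states this without spelling out the verification; your explicit checks --- membership of all four motif states in $\bar{M}$ via the generating rule, the node-count conditions CP(i)--CP(ii), and the edge bookkeeping for CP(iii)--CP(v), where acyclicity is exactly what excludes a chord edge (e.g.\ $I_kS_i$) that would violate CP(iii) --- supply precisely the details the paper leaves implicit.
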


This corollary is Theorem~\ref{theoremMain} expressed in a different notation.
\begin{rem}
From CP(i) and CP(ii), it is clear that Lemma~\ref{prop10} can be extended to the mixed initial condition where the probabilities of the initial states of each individual in the system are statistically independent, leading to $\la\psi_W^A\ra\la\psi_X^B\ra-\la\psi_Y^C\ra\la\psi_Z^D\ra=0$ at $t=0$. However, for general mixed initial conditions where correlations between individuals can occur, Lemma~\ref{prop10} does not hold and the pair-based model is not exact.
\end{rem}
\section{Application to some graphs which are not trees}
\label{s4.0}
To complete this work, we make a final observation which shows that the pair-based model can sometimes provide an exact representation of infectious dynamics on graphs which are not strictly trees. We first make two definitions which can be understood with reference to the examples in Figure~\ref{example_figures}. \begin{figure}
    \centerline{\includegraphics[width=0.50\textwidth]{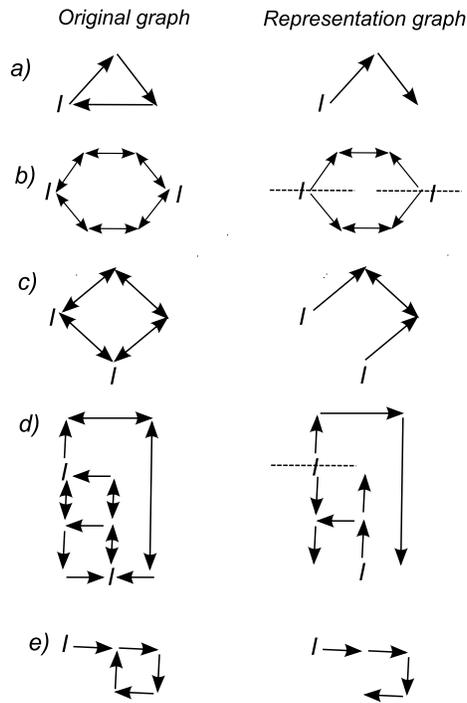}}
    \caption{The graphs on the left are the initial transmission networks where the initially infected nodes are indicated by the symbol $I$. The graphs on the right are the reduced representation graphs where the cuts for independent segments which occur for cases b and d are indicated with dashed lines. The tree structure of the graphs on the right shows that applying the pair-based model to these graphs generates an exact representation of the infection dynamics on the original system}
    \label{example_figures}
\end{figure} \begin{defn}
A reduced representation is a graph which is constructed from the initial transmission network and the given initial conditions by removing transmission routes which cannot carry infection dynamics.
\end{defn}
\begin{defn}
An independent segment is a region of a graph that is only connected to other regions via nodes in the segment which are initially infectious.
\end{defn}
\begin{thm}
Given SIR dynamics on a transmission network with infection and removal governed by Poisson processes and given an unmixed initial state of the system, if every independent segment of the reduced representation is a tree, then applying (\ref{0.2}) to this representation exactly generates the expected infection dynamics on the original transmission network.
\label{reduction_theorem}
\end{thm}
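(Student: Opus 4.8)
The plan is to reduce the claim to Theorem~\ref{theorem2} by showing that the reduced representation faithfully reproduces the infection dynamics of the original network, and that applying the pair-based closure to each independent segment is justified because each segment is a tree. I would proceed in three stages: first establish that passing to the reduced representation changes none of the relevant marginal and pairwise probabilities; second, show that the independent segments evolve independently of one another so that the full state space decomposes as a product over segments; and third, apply the already-proven tree result segment by segment.

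First I would argue that removing transmission routes which cannot carry infection leaves the stochastic dynamics unchanged. Given the unmixed initial state, a directed edge $T_{ij}$ can only ever contribute to the dynamics if node $j$ can become infectious along some directed path from an initially infected node and node $i$ is reachable only through $j$ along that route; an edge which cannot carry infection has $\la S_iI_j\ra\equiv 0$ for all time, so deleting it alters neither the master equation restricted to realisable states nor any of the terms appearing in~(\ref{0.3}). This is the same observation already used in Section~\ref{s4} for the single-initial-infected case, where states such as $I_kS_jI_i$ were shown to have probability zero; here I would formalise it as: the set of system states reachable from the unmixed initial condition under the original dynamics is identical to the set reachable under the reduced representation, and on this common support the two generators agree.

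Second I would show the product structure across independent segments. By Definition, an independent segment connects to the rest of the graph only through nodes that are initially infectious; on a tree such a node, once it appears in a motif, blocks further generation across the cut because the generating rule (Definition~\ref{GR}) only extends a motif through an $S$ or $I$ node into an external $I$, and an initially infectious cut node together with the tree property means the only path between segments passes through it. Using Lemma~\ref{lemma2} (no $II$ edges in $\bar M$) and Proposition~\ref{prop7}, I would verify that every motif state generated within $\bar M$ is contained in a single segment, so the basic and extended state spaces split as a disjoint union over segments and the differential equations~(\ref{31}) for one segment never reference motifs straddling a cut. Hence the closure relations needed within each segment involve only motifs internal to that segment.

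Finally, since each independent segment of the reduced representation is by hypothesis a tree and the initial condition restricted to it is still unmixed, Theorem~\ref{theorem2} applies verbatim within each segment: every compatible-pair closure of the form required by~(\ref{kw}) holds exactly. Because the state space decomposes over segments and each segment closes exactly, applying~(\ref{0.2}) to the reduced representation as a whole is exact, and by the first stage the reduced representation reproduces the expected infection dynamics of the original network. I expect the main obstacle to be the second stage: rigorously justifying the decoupling across cut nodes requires care, since an initially infectious cut node can recover and become $R$, and one must check that this does not open a transmission route between segments (it cannot, because removal only ever terminates a node's ability to transmit, never creates a new $S$-to-$I$ arc across the cut). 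Making precise that the generating rule never bridges a cut, and that the probability of any cross-segment motif is identically zero, is where the argument will need the most attention.
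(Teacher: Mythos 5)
Your proposal is correct in its main line and follows essentially the same route as the paper's own (much terser) three-step proof: removing edges that cannot carry infection changes nothing, no process can influence a segment across its initially infectious cut nodes, and the tree result then applies segment by segment. One warning about the two points you yourself flag as needing the most attention: the claim that the joint law ``decomposes as a product over segments'' and the claim that ``the probability of any cross-segment motif is identically zero'' are both false when two segments share an initially infectious cut node $v$ --- its single recovery clock correlates the infection events in the two segments, and a cross-segment motif state such as $\psi_{u_1,v,u_2}^{SIS}$ (with $u_1$, $u_2$ susceptible out-neighbours of $v$ lying in different segments) has probability $1$ at $t=0$. Neither claim is needed: your operative argument --- that in the reduced representation every in-neighbour of a segment node lies in that segment (all edges into initially infectious nodes having been removed), so the generating rule never bridges a cut and both the exact hierarchy~(\ref{31}) and the closed system~(\ref{0.2}) decouple segment by segment --- delivers the weaker but sufficient statement that each segment's marginal law equals that of the isolated segment, which is all the theorem requires since its conclusion concerns only expected (within-segment) infection dynamics.
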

\begin{proof}
By definition, the infection dynamics of the system remain unchanged after the removal of edges which cannot support infection dynamics. Additionally, the infection dynamics of any independent segment are independent of the dynamics on the rest of the graph because there is no process that allows influence across the initially infectious nodes. If the resulting representation graph is a set of trees, then since (\ref{0.2}) is an exact representation of the dynamics on each independent segment, solving (\ref{0.2}) on the reduced representation graph is equivalent to the infection dynamics on the original transmission network.
\end{proof}
Figure~\ref{example_figures} shows some graphs and the associated representation graphs where the dashed lines indicate the boundaries that separate independent segments. For each of these examples, the solution of (\ref{0.2}) on the representation graph exactly reproduces the expected infection dynamics of the original system. 

This suggests that the accuracy of the pair-based model could be increased by first generating the representation graph for the particular network and initial conditions prior to numerically solving the pair-based model.
\section{Discussion}
\label{s10}
We considered the pair-based variant of the subsystem approach to constructing epidemic models on networks (Sharkey 2008, 2011). We proved that for SIR dynamics on fixed tree graphs with exponentially distributed transmission and removal processes, the pair-based model provides an exact determination of the infection probability time course for each individual in the network. We also showed that the dynamics of some networks with cycles can be represented exactly by the pair-based model under specific initial conditions.

This represents the first provably exact deterministic model of epidemic dynamics on finite heterogeneous systems which has been numerically evaluated. Here we use the qualifying term ``heterogeneous'' to exclude systems with significant symmetry which may be employed to obtain exact representations in very specialised circumstances (Keeling and Ross 2008; Simon et al. 2011). In principle, the message-passing approach of Karrer and Newman (2010) will also yield an exact description of finite heterogeneous systems in a way that is numerically feasible, but to our knowledge this has not yet been implemented in this context. Interestingly, the message-passing method also applies more generally beyond the usual assumptions of Markovian dynamics to arbitrary distributions for transmission and removal processes, although there may be implementation issues for more general distributions.

We note that effective degree models can generate very good agreement with stochastic simulation (Ball and Neal 2008; Lindquist et al. 2011) as do the PGF or edge-based compartmental modelling methods (Miller et al. 2012; Miller and Volz 2012; Volz 2008), although exact correspondence has not been proven here. For some idealised networks, including fully connected networks and some configuration networks (Volz 2008), convergence to the expected value can be shown in the infinite population limit (Ball and Neal 2008; Decreusefond et al. 2012; Karrer and Newman 2010). However, these models have a large measure of homogeneity, and convergence only occurs for infinite populations.

It is intuitively understood that clustering is at the root of problems with models based around closures at the level of pairs (Keeling and Eames 2005). Previous analysis (Sharkey 2011) attributed the failure to anomalous terms which emerge in subsystem equations when differentiating closure approximations based around the statistical independence of individuals. Here, repeating similar analysis for a closure at the order of pairs in the context of tree graphs, these anomalies do not arise and we are able to prove that the closure is exact via induction.

In principle, models based around subsystems at the order of three nodes or higher could be constructed. The next higher-order model would require obtaining a closure which is able to preserve correlations between triples, and similarly for higher orders. This leads to an interesting theoretical question for future analysis: does the hierarchy of exact order-by-order models suggested in Sharkey (2011) exist, and if so, what form should the closure approximations take at each level? We conjecture that exact closures of a similar nature to those considered here are possible for networks with more structure, given that the order at which the closure is performed is guided by the network structure; future work will focus on this question.

\section*{Appendix}
The proof of Theorem~\ref{theorem1} is analogous to the proof of the single and pair equations by Sharkey (2011) in~\cite{Sharkey11}. In what follows, summations over Greek indices $\alpha$,$\beta$ are assumed to be over all $3^P$ possible system states. First we make some definitions.

\begin{defn}
For a system $\Gamma$ in state $\alpha$ and a single node $i$ of $\Gamma$ in state $a$ we define:
\beq
D_i^{\alpha a}=\left\{
\begin{array}{ll}
1 & \textrm{if $\psi_i^a\subset\Gamma^\alpha $,} \\ 
0 & \textrm{otherwise,} 
\end{array}\right.\nonumber
\eeq
denoting whether or not the specified single node state matches the system state. Note that this is just Definition~\ref{D_def} applied to the full system.
\end{defn}

\begin{defn}

\beq
\zeta_j^{\alpha\beta}=\left\{
\begin{array}{ll}
1 & \textrm{if the states of all individuals in $\Gamma$ are the same} \\
& \textrm{for $\Gamma^\alpha$ and for $\Gamma^\beta$ except for $\psi_j$ which may change,}  \\
0 & \textrm{otherwise.}
\end{array}\right.
\nonumber
\eeq
\end{defn}
\begin{prop}
For all $\alpha, i$:
\beq
\sum_a D_i^{\alpha a}=1,
\nonumber
\eeq
where the summation is over all possible states available to node $i$.
\label{prop3}
\end{prop}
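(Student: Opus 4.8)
The plan is to exploit the fact that, in any fixed system configuration, each individual occupies exactly one of the three states $S$, $I$, $R$. Recall that $D_i^{\alpha a}$ is the indicator that equals $1$ precisely when $\psi_i^a\subset\Gamma^\alpha$, that is, when node $i$ is in state $a$ in the configuration $\Gamma^\alpha$, and equals $0$ otherwise. The summation $\sum_a D_i^{\alpha a}$ therefore runs over $a\in\{S,I,R\}$ and equals $D_i^{\alpha S}+D_i^{\alpha I}+D_i^{\alpha R}$.

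First I would note that $\Gamma^\alpha$ is one specific element among the $3^P$ system configurations, so it assigns to node $i$ a definite and unique state. Hence exactly one of the three conditions $\psi_i^S\subset\Gamma^\alpha$, $\psi_i^I\subset\Gamma^\alpha$, $\psi_i^R\subset\Gamma^\alpha$ holds, while the other two fail, since a node cannot occupy two distinct states at once. Consequently precisely one of the three indicators equals $1$ and the remaining two equal $0$, which gives $\sum_a D_i^{\alpha a}=1$.

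There is essentially no obstacle here: the result is an immediate expression of the mutual exclusivity and exhaustiveness of the three node states within a single pure configuration. The only point one must verify is that the state space has been set up so that each node carries exactly one symbol from $\{S,I,R\}$ in each configuration $\Gamma^\alpha$, and this is built directly into the construction of the $3^P$ system states described in Section~\ref{s3}. The proposition thus serves as a bookkeeping identity that will be used to simplify the master-equation manipulations in the subsequent proof of Theorem~\ref{theorem1}.
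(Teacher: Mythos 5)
Your proof is correct and matches the paper's own argument, which is simply the observation that in any system state $\Gamma^\alpha$ each node occupies exactly one of the states $S$, $I$, $R$, so precisely one indicator $D_i^{\alpha a}$ equals $1$. You have merely spelled out in more detail what the paper states in one line; there is no substantive difference in approach.
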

\begin{proof}
Statement that for a given system state $\Gamma^\alpha$, or subsystem state $\psi_W^A$, each node must be in a unique state.
\end{proof}

\begin{prop}
For all $\beta,i,a$:
\beq
\sum_\alpha D_i^{\alpha a}\zeta_i^{\alpha\beta}=1.
\nonumber
\eeq
\label{prop4}
\end{prop}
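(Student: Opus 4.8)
The plan is to recognize that the summand $D_i^{\alpha a}\zeta_i^{\alpha\beta}$ is a product of two indicator functions, each valued in $\{0,1\}$, so the product equals $1$ exactly when both factors equal $1$ and equals $0$ otherwise. Consequently, the sum $\sum_\alpha D_i^{\alpha a}\zeta_i^{\alpha\beta}$ simply counts the number of system states $\Gamma^\alpha$ that simultaneously satisfy $D_i^{\alpha a}=1$ and $\zeta_i^{\alpha\beta}=1$. My strategy is therefore to show that this count is exactly one.

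To do this, I would interpret the two conditions and argue that together they pin down a unique state $\alpha$. The condition $\zeta_i^{\alpha\beta}=1$ requires that $\Gamma^\alpha$ agrees with the fixed state $\Gamma^\beta$ on every node except possibly node $i$; it places no constraint on the state of node $i$ itself. The condition $D_i^{\alpha a}=1$ requires that node $i$ be in state $a$ in $\Gamma^\alpha$. Combining these, every node $j\neq i$ must inherit its state from $\Gamma^\beta$, while node $i$ is assigned the state $a$. Since a system state is determined by an independent assignment of one state to each of the $P$ nodes (the $3^P$ states being in bijection with such assignments), this data specifies exactly one admissible $\alpha$. Hence precisely one term in the sum equals $1$, giving the claimed total of $1$.

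I do not anticipate a genuine obstacle here, as the result is essentially a counting observation supported by Proposition~\ref{prop3}, which guarantees that each node occupies a single well-defined state in any configuration. The one point deserving a brief check is that the two constraints are compatible rather than conflicting: $\zeta_i^{\alpha\beta}$ deliberately leaves node $i$ free to differ from $\Gamma^\beta$, so forcing it into state $a$ via $D_i^{\alpha a}$ creates no contradiction, whatever the state of node $i$ happens to be in $\Gamma^\beta$. With that compatibility confirmed, uniqueness follows immediately and the proof concludes.
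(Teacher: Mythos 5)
Your proof is correct and follows essentially the same route as the paper, which disposes of the claim in one line by observing that there is exactly one system state identical to $\Gamma^\beta$ except that node $i$ is in state $a$. Your write-up simply makes explicit the counting argument (indicators, bijection between states and node-wise assignments) that the paper leaves implicit.
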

\begin{proof}
Statement that there is only one system state which is identical to $\Gamma^\beta$ except that node $i$ is in state $\psi_i^a$.
\end{proof}
\begin{prop}
For any subsystem $\psi_W^A$ and $\forall k\in\{1,2,...,r\}$:
\beq
D_{W_k}^{\alpha A_k}D_{W_k}^{\alpha a}=D_k^{Aa}D_{W_k}^{\alpha a}
\nonumber
\eeq
for all $\alpha,a$.
\label{prop5}
\end{prop}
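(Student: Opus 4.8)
The plan is to prove the identity pointwise in $\alpha$ and $a$ by a short case split on the common factor $D_{W_k}^{\alpha a}$, using the fact that each node occupies a unique state in any fixed system configuration (Proposition~\ref{prop3}). The key observation is that both sides carry $D_{W_k}^{\alpha a}$ as a factor, so the content of the claim lives entirely on those configurations $\Gamma^\alpha$ in which node $W_k$ is in state $a$; on that restricted set the remaining factors must agree.

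First I would treat the case $D_{W_k}^{\alpha a}=0$. Here the shared factor vanishes and both sides equal $0$, so the identity holds with nothing further to check. This covers every $\alpha$ in which node $W_k$ is not in state $a$.

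It remains to handle $D_{W_k}^{\alpha a}=1$, i.e. $\psi_{W_k}^a\subset\Gamma^\alpha$, so node $W_k$ really is in state $a$ in $\Gamma^\alpha$. Because a node cannot simultaneously be in two distinct states, $D_{W_k}^{\alpha A_k}$ equals $1$ exactly when $A_k=a$ and $0$ otherwise; that is precisely $D_k^{Aa}$ in the sense of Definition~\ref{D_def}. Substituting $D_{W_k}^{\alpha A_k}=D_k^{Aa}$ into the left-hand side then yields the right-hand side directly.

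The hard part is essentially nonexistent, since this is an identity among $0/1$ indicators; the only thing to get right is the bookkeeping distinguishing the indicator $D_{W_k}^{\alpha A_k}$ (comparing the free system state $\alpha$ against the fixed subsystem symbol $A_k$) from the indicator $D_k^{Aa}$ (comparing $A_k$ against the fixed symbol $a$). The uniqueness-of-state property, equivalently $\sum_a D_i^{\alpha a}=1$ from Proposition~\ref{prop3}, is exactly what collapses the former to the latter once node $W_k$ is known to sit in state $a$.
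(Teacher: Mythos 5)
Your proof is correct and is essentially identical to the paper's: both argue by the case split on $D_{W_k}^{\alpha a}$, dismissing the case $D_{W_k}^{\alpha a}=0$ trivially and using uniqueness of a node's state in $\Gamma^\alpha$ to show that, when $D_{W_k}^{\alpha a}=1$, the indicator $D_{W_k}^{\alpha A_k}$ coincides with $D_k^{Aa}$. Your explicit appeal to Proposition~\ref{prop3} simply makes precise the uniqueness fact the paper uses implicitly.
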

\begin{proof}

Proposition is true when $D_{W_k}^{\alpha a}=0$. When $D_{W_k}^{\alpha a}=1$ we have:

$D_{W_k}^{\alpha A_k}=1\Leftrightarrow a=A_k\Leftrightarrow D_k^{Aa}=1$,

$D_{W_k}^{\alpha A_k}=0\Leftrightarrow a\neq A_k\Leftrightarrow D_k^{Aa}=0$.

\end{proof}
\begin{prop}
For any subsystem $\psi_W^A$ and $\forall k\in\{1,2,...,r\}$:
\beq
\sum_\alpha D_{W_k}^{\alpha a}\zeta_{W_k}^{\alpha\beta}\prod_{j=1,j\neq k}^rD_{W_j}^{\alpha A_j}D_{W_j}^{\beta A_j}=\prod_{j=1, j\neq k}^r D_{W_j}^{\beta A_j}
\nonumber
\eeq
for all $\beta,a$.
\label{prop6}
\end{prop}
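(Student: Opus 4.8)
The plan is to reduce this identity to Proposition~\ref{prop4} by exploiting the fact that the factor built from the state $\Gamma^\beta$ is constant across the summation. First I would separate the product over $j\neq k$ into its $\alpha$-part and its $\beta$-part and pull the latter out of the sum, rewriting the left-hand side as
\beq
\left(\prod_{j=1,j\neq k}^r D_{W_j}^{\beta A_j}\right)\sum_\alpha D_{W_k}^{\alpha a}\,\zeta_{W_k}^{\alpha\beta}\prod_{j=1,j\neq k}^r D_{W_j}^{\alpha A_j}. \nonumber
\eeq
I would then argue by cases on the value of the prefactor $\prod_{j\neq k}D_{W_j}^{\beta A_j}$, which is an indicator and hence equals either $0$ or $1$.

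If the prefactor is $0$, then both this expression and the right-hand side of the claimed identity vanish, so there is nothing further to check. The substantive case is when $\prod_{j\neq k}D_{W_j}^{\beta A_j}=1$, meaning that node $W_j$ is in state $A_j$ in $\Gamma^\beta$ for every $j\neq k$; here the right-hand side equals $1$, so it remains to show that the residual sum equals $1$.

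The key observation, and the one step I expect to require care, is that the $\alpha$-dependent product is redundant on the support of $\zeta_{W_k}^{\alpha\beta}$. Indeed, by the definition of $\zeta_{W_k}^{\alpha\beta}$, whenever $\zeta_{W_k}^{\alpha\beta}=1$ the states $\Gamma^\alpha$ and $\Gamma^\beta$ agree on every node other than $W_k$; in particular node $W_j$ has the same state in $\Gamma^\alpha$ as in $\Gamma^\beta$ for each $j\neq k$. Since $D_{W_j}^{\beta A_j}=1$ in the present case, this forces $D_{W_j}^{\alpha A_j}=1$, so $\zeta_{W_k}^{\alpha\beta}\prod_{j\neq k}D_{W_j}^{\alpha A_j}=\zeta_{W_k}^{\alpha\beta}$. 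The remaining sum is then $\sum_\alpha D_{W_k}^{\alpha a}\,\zeta_{W_k}^{\alpha\beta}$, which is exactly $1$ by Proposition~\ref{prop4} applied with $i=W_k$, completing this case and hence the proof.

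Everything here is bookkeeping with indicator functions on the full $3^P$ state space; no appeal to the tree hypothesis or to the generating rule is needed, and the only genuinely content-bearing move is the correct reading of $\zeta_{W_k}^{\alpha\beta}$ as enforcing agreement of the two system states away from node $W_k$, which is what licenses dropping the $\alpha$-factors and invoking the earlier proposition.
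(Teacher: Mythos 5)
Your proof is correct and takes essentially the same route as the paper's: both arguments case-split on whether $\prod_{j\neq k}D_{W_j}^{\beta A_j}$ is $0$ or $1$, use the defining property of $\zeta_{W_k}^{\alpha\beta}$ (agreement of $\Gamma^\alpha$ and $\Gamma^\beta$ away from node $W_k$) to make the $\alpha$-indicators redundant, and then invoke Proposition~\ref{prop4} to evaluate the remaining sum. The only cosmetic difference is that you factor the $\beta$-product out of the sum before the case analysis, whereas the paper works directly with the unique state $\Gamma^\alpha$ supplied by Proposition~\ref{prop4}; the content is identical.
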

\begin{proof}

Proposition is true when:
\beq
\prod_{j=1, j\neq k}^r D_{W_j}^{\beta A_j}=0.
\nonumber
\eeq
From Proposition~\ref{prop4} there must be a single state $\Gamma^\alpha$ for which $D_{W_k}^{\alpha a}\zeta_{W_k}^{\alpha\beta}=1$, otherwise it is zero. When
\beq
\prod_{j=1, j\neq k}^r D_{W_j}^{\beta A_j}=1,
\nonumber
\eeq
 we must also have (for the state when $D_{W_k}^{\alpha\beta}\zeta_{W_k}^{\alpha a}=1$):
\beq
\prod_{j=1, j\neq k}^r D_{W_j}^{\alpha A_j}=1,
\nonumber
\eeq
because only site $\psi_{W_k}$ can change state during this transition, establishing the proposition.
\end{proof}

We can now use these propositions to prove Theorem~\ref{theorem1}:

\begin{proof}

We have that:
\begin{eqnarray}
\la\psi_W^A\ra&=&\sum_\alpha\la\Gamma^\alpha\ra\prod_{i=1}^rD_{W_i}^{\alpha A_i}.
\nonumber
\end{eqnarray}
Taking the derivative of this with respect to time and substituting in the system master equation (\ref{eq3.1}) gives
\begin{eqnarray}\nonumber
\dot{\la\psi_W^A\ra}&=&\sum_\alpha\dot{\la\Gamma^\alpha\ra}\prod_{i=1}^rD_{W_i}^{\alpha A_i} \\
&=&\sum_{\alpha\beta}\sigma^{\alpha\beta}\la\Gamma^{\beta}\ra\prod_{i=1}^rD_{W_i}^{\alpha A_i}-\sum_{\alpha\beta}\sigma^{\beta\alpha}\la\Gamma^\alpha\ra\prod_{i=1}^rD_{W_i}^{\alpha A_i}.
\label{16}
\end{eqnarray}
From Proposition~\ref{prop3}:
\begin{eqnarray}\nonumber
1&=&\left [\sum_{a_1}D_{W_1}^{\alpha a_1}\right ]...\left [\sum_{a_r}D_{W_r}^{\alpha a_r}\right ]\left [\sum_{b_1}D_{W_1}^{\beta b_1}\right ]...\left [\sum_{b_r}D_{W_r}^{\beta b_r}\right ] \\
&=&\sum_{k=1}^r\sum_{a_kb_k}\prod_{j=1}^rD_{W_j}^{\alpha a_j}D_{W_j}^{\beta b_j}.
\nonumber
\end{eqnarray}
Multiplying the right of (\ref{16}) by this gives:
\begin{eqnarray}\nonumber
\dot{\la\psi_W^A\ra}&=&\sum_{\alpha\beta}\sigma^{\alpha\beta}\la\Gamma^{\beta}\ra\prod_{i=1}^rD_{W_i}^{\alpha A_i} \sum_{k=1}^r\sum_{a_kb_k}\prod_{j=1}^rD_{W_j}^{\alpha a_j}D_{W_j}^{\beta b_j}\\
&&-\sum_{\alpha\beta}\sigma^{\beta\alpha}\la\Gamma^\alpha\ra\prod_{i=1}^rD_{W_i}^{\alpha A_i}\sum_{k=1}^r\sum_{a_kb_k}\prod_{j=1}^rD_{W_j}^{\alpha a_j}D_{W_j}^{\beta b_j}.
\nonumber
\end{eqnarray}
This can be simplified using the fact that $\sigma^{\alpha\beta}=0$ whenever the state of the subsystem $\psi_W$ differs by more than a single individual $\psi_{W_k}$, $k\in \{1...r\}$ between states $\Gamma^\alpha$ and $\Gamma^\beta$ which means that $a_j=b_j=A_j$ for $j\neq k$:
\begin{eqnarray}\nonumber
\dot{\la\psi_W^A\ra}&=&\sum_{\alpha\beta}\sigma^{\alpha\beta}\la\Gamma^{\beta}\ra\prod_{i=1}^rD_{W_i}^{\alpha A_i} \sum_{k=1}^r\prod_{j=1, j\neq k}^rD_{W_j}^{\alpha A_j}D_{W_j}^{\beta A_j}\sum_{a_kb_k}D_{W_k}^{\alpha a_k}D_{W_k}^{\beta b_k} \\ \nonumber
&&-\sum_{\alpha\beta}\sigma^{\beta\alpha}\la\Gamma^\alpha\ra\prod_{i=1}^rD_{W_i}^{\alpha A_i}\sum_{k=1}^r\prod_{j=1, j\neq k}^rD_{W_j}^{\alpha A_j}D_{W_j}^{\beta A_j}\sum_{a_kb_k}D_{W_k}^{\alpha a_k}D_{W_k}^{\beta b_k} \\ \nonumber
&=&\sum_{\alpha\beta}\sigma^{\alpha\beta}\la\Gamma^{\beta}\ra \sum_{k=1}^rD_{W_k}^{\alpha A_k}\prod_{j=1, j\neq k}^rD_{W_j}^{\alpha A_j}D_{W_j}^{\beta A_j}\sum_{a_kb_k}D_{W_k}^{\alpha a_k}D_{W_k}^{\beta b_k} \\ \nonumber
&&-\sum_{\alpha\beta}\sigma^{\beta\alpha}\la\Gamma^\alpha\ra\sum_{k=1}^rD_{W_k}^{\alpha A_k}
\prod_{j=1, j\neq k}^rD_{W_j}^{\alpha A_j}D_{W_j}^{\beta A_j}\sum_{a_kb_k}D_{W_k}^{\alpha a_k}D_{W_k}^{\beta b_k},
\end{eqnarray}
where the last equality follows from $D_{W_j}^{\alpha A_j}D_{W_j}^{\alpha A_j}=D_{W_j}^{\alpha A_j}$.

For SIR dynamics, we can do the summations over $a_k$ and $b_k$:
\begin{eqnarray}\nonumber
\dot{\la\psi_W^A\ra}&=&\sum_{\alpha\beta}\sigma^{\alpha\beta}\la\Gamma^\beta\ra\sum_{k=1}^rD_{W_k}^{\alpha A_k}\prod_{j=1, j\neq k}^rD_{W_j}^{\alpha A_j}D_{W_j}^{\beta A_j}D_{W_k}^{\alpha I}D_{W_k}^{\beta S} \\ \nonumber
&&\sum_{\alpha\beta}\sigma^{\alpha\beta}\la\Gamma^\beta\ra\sum_{k=1}^rD_{W_k}^{\alpha A_k}\prod_{j=1, j\neq k}^rD_{W_j}^{\alpha A_j}D_{W_j}^{\beta A_j}D_{W_k}^{\alpha R}D_{W_k}^{\beta I} \\ \nonumber
&&-\sum_{\alpha\beta}\sigma^{\beta\alpha}\la\Gamma^{\alpha}\ra\sum_{k=1}^rD_{W_k}^{\alpha A_k}\prod_{j=1,j\neq k}^rD_{W_j}^{\alpha A_j}D_{W_j}^{\beta A_j}D_{W_k}^{\alpha S}D_{W_k}^{\beta I} \\
&&-\sum_{\alpha\beta}\sigma^{\beta\alpha}\la\Gamma^{\alpha}\ra\sum_{k=1}^rD_{W_k}^{\alpha A_k}\prod_{j=1,j\neq k}^rD_{W_j}^{\alpha A_j}D_{W_j}^{\beta A_j}D_{W_k}^{\alpha I}D_{W_k}^{\beta R}.
\label{20}
\end{eqnarray}

Now we introduce the relevant terms in the transition matrix at the level of the system:
\begin{eqnarray}\nonumber
\sigma^{\alpha\beta}D_{W_k}^{\alpha I}D_{W_k}^{\beta S}\prod_{j=1,j\neq k}^rD_{W_j}^{\alpha A_j}D_{W_j}^{\beta A_j}&=&\sum_{n=1}^PT_{W_kn}D_{W_k}^{\alpha I}D_{W_k}^{\beta S}D_n^{\beta I}\zeta_{W_k}^{\alpha\beta}\prod_{j=1,j\neq k}^rD_{W_j}^{\alpha A_j}D_{W_j}^{\beta A_j},  \\ \nonumber
\sigma^{\alpha\beta}D_{W_k}^{\alpha R}D_{W_k}^{\beta I}\prod_{j=1,j\neq k}^rD_{W_j}^{\alpha A_j}D_{W_j}^{\beta A_j}&=&\gamm_{W_k}D_{W_k}^{\alpha R}D_{W_k}^{\beta I}\zeta_{W_k}^{\alpha\beta}\prod_{j=1,j\neq k}^rD_{W_j}^{\alpha A_j}D_{W_j}^{\beta A_j}, \\ \nonumber
\sigma^{\beta\alpha}D_{W_k}^{\alpha S}D_{W_k}^{\beta I}\prod_{j=1,j\neq k}^rD_{W_j}^{\alpha A_j}D_{W_j}^{\beta A_j}&=&\sum_{n=1}^PT_{W_kn}D_{W_k}^{\alpha S}D_{W_k}^{\beta I}D_n^{\alpha I}\zeta_{W_k}^{\beta\alpha}\prod_{j=1,j\neq k}^rD_{W_j}^{\alpha A_j}D_{W_j}^{\beta A_j}, \\ \nonumber
\sigma^{\beta\alpha}D_{W_k}^{\alpha I}D_{W_k}^{\beta R}\prod_{j=1,j\neq k}^rD_{W_j}^{\alpha A_j}D_{W_j}^{\beta A_j}&=&\gamm_{W_k}D_{W_k}^{\alpha I}D_{W_k}^{\beta R}\zeta_{W_k}^{\beta\alpha}\prod_{j=1,j\neq k}^rD_{W_j}^{\alpha A_j}D_{W_j}^{\beta A_j},
\end{eqnarray}
where these equations are designed so that they are satisfied for any combination of $\alpha$, $\beta$, $k$. Substituting these into (\ref{20}) gives:
\begin{eqnarray}\nonumber
\dot{\la\psi_W^A\ra}&=&\sum_{\alpha\beta}\la\Gamma^\beta\ra\sum_{k=1}^rD_{W_k}^{\alpha A_k}\sum_{n=1}^PT_{W_kn}D_{W_k}^{\alpha I}D_{W_k}^{\beta S}D_n^{\beta I}\zeta_{W_k}^{\alpha\beta}\prod_{j=1,j\neq k}^rD_{W_j}^{\alpha A_j}D_{W_j}^{\beta A_j}  \\ \nonumber
&&+\sum_{\alpha\beta}\la\Gamma^\beta\ra\sum_{k=1}^rD_{W_k}^{\alpha A_k}\gamm_{W_k}D_{W_k}^{\alpha R}D_{W_k}^{\beta I}\zeta_{W_k}^{\alpha\beta}\prod_{j=1,j\neq k}^rD_{W_j}^{\alpha A_j}D_{W_j}^{\beta A_j} \\ \nonumber
&&-\sum_{\alpha\beta}\la\Gamma^{\alpha}\ra\sum_{k=1}^rD_{W_k}^{\alpha A_k}\sum_{n=1}^PT_{W_kn}D_{W_k}^{\alpha S}D_{W_k}^{\beta I}D_n^{\alpha I}\zeta_{W_k}^{\beta\alpha}\prod_{j=1,j\neq k}^rD_{W_j}^{\alpha A_j}D_{W_j}^{\beta A_j}\\ \nonumber
&&-\sum_{\alpha\beta}\la\Gamma^{\alpha}\ra\sum_{k=1}^rD_{W_k}^{\alpha A_k}\gamm_{W_k}D_{W_k}^{\alpha I}D_{W_k}^{\beta R}\zeta_{W_k}^{\beta\alpha}\prod_{j=1,j\neq k}^rD_{W_j}^{\alpha A_j}D_{W_j}^{\beta A_j}.
\end{eqnarray}
We can rearrange the summation order:
\begin{eqnarray}\nonumber
\dot{\la\psi_W^A\ra}&=&\sum_{k=1}^r\sum_{n=1}^PT_{W_kn}\sum_{\beta}\la\Gamma^\beta\ra D_{W_k}^{\beta S}D_n^{\beta I}\sum_\alpha D_{W_k}^{\alpha A_k}D_{W_k}^{\alpha I}\zeta_{W_k}^{\alpha\beta}\prod_{j=1, j\neq k}^rD_{W_j}^{\alpha A_j}D_{W_j}^{\beta A_j} \\ \nonumber
&&+\sum_{k=1}^r\gamm_{W_k}\sum_\beta \la\Gamma^{\beta}\ra D_{W_k}^{\beta I}\sum_\alpha D_{W_k}^{\alpha A_k}D_{W_k}^{\alpha R}\zeta_{W_k}^{\alpha\beta}\prod_{j=1,j\neq k}^rD_{W_j}^{\alpha A_j}D_{W_j}^{\beta A_j} \\ \nonumber
&&-\sum_{k=1}^r\sum_{n=1}^PT_{W_kn}\sum_\alpha\la\Gamma^{\alpha}\ra D_{W_k}^{\alpha A_k}D_{W_k}^{\alpha S}D_n^{\alpha I}\sum_\beta  D_{W_k}^{\beta I}\zeta_{W_k}^{\beta\alpha} \prod_{j=1,j\neq k}^rD_{W_j}^{\alpha A_j}D_{W_j}^{\beta A_j}\\ \nonumber
&&-\sum_{k=1}^r\gamm_{W_k}\sum_\alpha \la\Gamma^{\alpha}\ra D_{W_k}^{\alpha A_k}D_{W_k}^{\alpha I}\sum_\beta D_{W_k}^{\beta R}\zeta_{W_k}^{\beta\alpha}\prod_{j=1,j\neq k}^rD_{W_j}^{\alpha A_j}D_{W_j}^{\beta A_j},
\end{eqnarray}
and apply Proposition~\ref{prop5}:
\begin{eqnarray}\nonumber
\dot{\la\psi_W^A\ra}&=&\sum_{k=1}^rD_k^{AI}\sum_{n=1}^PT_{W_kn}\sum_{\beta}\la\Gamma^\beta\ra D_{W_k}^{\beta S}D_n^{\beta I}\sum_\alpha D_{W_k}^{\alpha I}\zeta_{W_k}^{\alpha\beta}\prod_{j=1, j\neq k}^rD_{W_j}^{\alpha A_j}D_{W_j}^{\beta A_j} \\ \nonumber
&&+\sum_{k=1}^rD_k^{AR}\gamm_{W_k}\sum_\beta \la\Gamma^{\beta}\ra D_{W_k}^{\beta I}\sum_\alpha D_{W_k}^{\alpha R}\zeta_{W_k}^{\alpha\beta}\prod_{j=1,j\neq k}^rD_{W_j}^{\alpha A_j}D_{W_j}^{\beta A_j} \\ \nonumber
&&-\sum_{k=1}^rD_k^{AS}\sum_{n=1}^PT_{W_kn}\sum_\alpha\la\Gamma^{\alpha}\ra D_{W_k}^{\alpha S}D_n^{\alpha I}\sum_\beta  D_{W_k}^{\beta I}\zeta_{W_k}^{\beta\alpha} \prod_{j=1,j\neq k}^rD_{W_j}^{\alpha A_j}D_{W_j}^{\beta A_j}\\ \nonumber
&&-\sum_{k=1}^rD_k^{AI}\gamm_{W_k}\sum_\alpha \la\Gamma^{\alpha}\ra D_{W_k}^{\alpha I}\sum_\beta D_{W_k}^{\beta R}\zeta_{W_k}^{\beta\alpha}\prod_{j=1,j\neq k}^rD_{W_j}^{\alpha A_j}D_{W_j}^{\beta A_j}.
\end{eqnarray}
Applying Proposition~\ref{prop6} gives
\begin{eqnarray}\nonumber
\dot{\la\psi_W^A\ra}&=&\sum_{k=1}^rD_k^{AI}\sum_{n=1}^PT_{W_kn}\sum_{\beta}\la\Gamma^\beta\ra D_{W_k}^{\beta S}D_n^{\beta I}\prod_{j=1, j\neq k}^rD_{W_j}^{\beta A_j} \\ \nonumber
&&+\sum_{k=1}^rD_k^{AR}\gamm_{W_k}\sum_\beta \la\Gamma^{\beta}\ra D_{W_k}^{\beta I}\prod_{j=1,j\neq k}^rD_{W_j}^{\beta A_j} \\ \nonumber
&&-\sum_{k=1}^rD_k^{AS}\sum_{n=1}^PT_{W_kn}\sum_\alpha\la\Gamma^{\alpha}\ra D_{W_k}^{\alpha S}D_n^{\alpha I}\prod_{j=1,j\neq k}^rD_{W_j}^{\alpha A_j}\\ \nonumber
&&-\sum_{k=1}^rD_k^{AI}\gamm_{W_k}\sum_\alpha \la\Gamma^{\alpha}\ra D_{W_k}^{\alpha I}\prod_{j=1,j\neq k}^rD_{W_j}^{\alpha A_j}.
\end{eqnarray}
Breaking up the sums over $n$ on the first and third lines depending on whether the node $n$ is internal or external to the motif $\psi_W$ gives:
\begin{eqnarray}\nonumber
\dot{\la\psi_W^A\ra}&=&\sum_{k=1}^rD_k^{AI}\sum_{n\notin W}T_{W_kn}\sum_{\beta}\la\Gamma^\beta\ra D_{W_k}^{\beta S}D_n^{\beta I}\prod_{j=1, j\neq k}^rD_{W_j}^{\beta A_j} \\ \nonumber
&&+\sum_{k=1}^rD_k^{AI}\sum_{n\in W}T_{W_kn}\sum_{\beta}\la\Gamma^\beta\ra D_{W_k}^{\beta S}D_n^{\beta I}\prod_{j=1, j\neq k}^rD_{W_j}^{\beta A_j} \\ \nonumber
&&+\sum_{k=1}^rD_k^{AR}\gamm_{W_k}\sum_\beta \la\Gamma^{\beta}\ra D_{W_k}^{\beta I}\prod_{j=1,j\neq k}^rD_{W_j}^{\beta A_j} \\ \nonumber
&&-\sum_{k=1}^rD_k^{AS}\sum_{n\notin W}T_{W_kn}\sum_\alpha\la\Gamma^{\alpha}\ra D_{W_k}^{\alpha S}D_n^{\alpha I}\prod_{j=1,j\neq k}^rD_{W_j}^{\alpha A_j} \\ \nonumber
&&-\sum_{k=1}^rD_k^{AS}\sum_{n\in W}T_{W_kn}\sum_\alpha\la\Gamma^{\alpha}\ra D_{W_k}^{\alpha S}D_n^{\alpha I}\prod_{j=1,i\neq k}^rD_{W_j}^{\alpha A_j} \\ \nonumber
&&-\sum_{k=1}^rD_k^{AI}\gamm_{W_k}\sum_\alpha \la\Gamma^{\alpha}\ra D_{W_k}^{\alpha I}\prod_{j=1,j\neq k}^rD_{W_j}^{\alpha A_j}.
\end{eqnarray}

Lines 1 and 4 can be immediately recognised as the generating rule (Definition~\ref{GR}). For $n\notin W$ and $n\in N_{W_k}$:
\begin{eqnarray}\nonumber
D_k^{AI}\sum_\beta\la\Gamma^\beta\ra D_{W_k}^{\beta S}D_n^{\beta I}\prod_{j=1,j\neq k}^rD_{W_j}^{\beta A_j}&=&D_k^{AI}\la g_{W_k}^n(\psi_W^A)\ra, \\ \nonumber
D_k^{AS}\sum_\alpha\la\Gamma^\alpha\ra D_{W_k}^{\alpha S}D_n^{\alpha I}\prod_{j=1,j\neq k}^rD_{W_j}^{\alpha A_j}&=&D_k^{AS}\la g_{W_k}^n(\psi_W^A)\ra,
\end{eqnarray}
and if $n\notin N_{W_k}$ then $T_{W_kn}=0$.

Line 2 requires that $n\in W$. Let $l\in \{1,2,...,r\}$ and $W_l=n$. Then:
\begin{eqnarray}\nonumber
&&\sum_{l=1}^rT_{W_kW_l}\sum_\beta\la\Gamma^\beta\ra D_{W_k}^{\beta S}D_n^{\beta I}\prod_{j=1,j\neq k}^rD_{W_j}^{\beta A_j} \\ \nonumber
&&=\sum_{l=1}^rT_{W_kW_l}\sum_\beta\la\Gamma^\beta\ra D_{W_k}^{\beta S}D_{W_l}^{\beta I}D_{W_l}^{\beta A_l}\prod_{j=1,j\neq k, j\neq l}^rD_{W_j}^{\beta A_j} \\ \nonumber
&&=\sum_{l=1}^rT_{W_kW_l}D_l^{AI}\sum_\beta\la\Gamma^\beta\ra D_{W_k}^{\beta S}D_{W_l}^{\beta I}\prod_{j=1,j\neq k, j\neq l}^rD_{W_j}^{\beta A_j},
\end{eqnarray}
where the last equality follows from Proposition~\ref{prop5}. Using the definition of $h_{W_k}(\psi_W^A)$, this becomes:
\begin{eqnarray*}
\sum_{l=1}^rT_{W_kW_l}D_l^{AI}\la h_{W_k}(\psi_W^A)\ra,
\end{eqnarray*}
and similarly for line 5.

We obtain:
\begin{eqnarray}\nonumber
\dot{\la\psi_W^A\ra}&=&\sum_{k=1}^rD_k^{AI}\sum_{n\notin W}T_{W_kn}\la g_{W_k}^n(\psi_W^A)\ra \\ \nonumber
&&+\sum_{k=1}^rD_k^{AI}\sum_{l=1}^rD_l^{AI}T_{W_kW_l}\la h_{W_k}(\psi_W^A)\ra \\ \nonumber
&&+\sum_{k=1}^rD_k^{AR}\gamm_{W_k}\la f_{W_k}(\psi_W^A)\ra \\ \nonumber
&&-\sum_{k=1}^rD_k^{AS}\sum_{n\notin W}T_{W_kn}\la g_{W_k}^n(\psi_W^A)\ra \\ \nonumber
&&-\sum_{k=1}^rD_k^{AS}\sum_{l=1}^rD_l^{AI}T_{W_kW_l}\la h_{W_k}(\psi_W^A)\ra \\ \nonumber
&&-\sum_{k=1}^rD_k^{AI}\gamm_{W_k}\la\psi_W^A\ra,
\end{eqnarray}
where the $h_{W_k}$ operator on the 5th line is superfluous but allows us to write the equation in the form of (\ref{14}).
\end{proof}
%If you can include EPS (encapsulated postscript) figures in your
%paper, then please use the \begin{figure} . . . \end{figure}
%commands and usual.  If EPS files are unavailable, then please use
%the following command to indicate the approximate position of the
%figures.

%\Fig{Caption.}

\section{Acknowledgements}
This research was facilitated in part by the Research Centre for Mathematics and Modelling at The University of Liverpool. We thank two anonymous reviewers for helpful comments which improved the manuscript.

\end{document}